\DeclareMathOperator{\C}{\mathit{C}\,}
\DeclareMathOperator{\CC}{\mathit{CC}\,}
\DeclareMathOperator{\CCC}{\mathit{CCC}\,}
\DeclareMathOperator{\CCCC}{\mathit{CCCC}\,}
\DeclareMathOperator{\K}{\mathit{K}\,}
\DeclareMathOperator{\KK}{\mathit{KK}\,}
\DeclareMathOperator{\KKK}{\mathit{KKK}\,}
\newcommand{\0}{\mathbf{0'}}
\DeclareMathOperator{\KH}{\mathit{K}^\0\,}
\DeclareMathOperator{\CH}{\mathit{C}^\0\,}
\DeclareMathOperator{\depth}{\mathit{depth}}
\newcommand{\la}{\langle}
\newcommand{\ra}{\rangle}
\newcommand{\rar}{\rightarrow}
\newcommand{\logg}{\log^{(2)}}
\newcommand{\loggg}{\log^{(3)}}
\newcommand{\logiv}{\log^{(4)}}
\newcommand{\logv}{\log^{(5)}}
\theoremstyle{plain}
\newtheorem{definition}{Definition}
\newtheorem{theorem}{Theorem}
\newtheorem*{theorem*}{Theorem}
\newtheorem{proposition}[theorem]{Proposition}
\newtheorem*{proposition*}{Proposition}
\newtheorem{lemma}[theorem]{Lemma}
\theoremstyle{remark}
\newtheorem{remark}{Remark}
\begin{document}

\title{Relating and contrasting plain and prefix Kolmogorov complexity}
\author{Bruno Bauwens\footnote{%
    LORIA, Universit\'e de Lorraine. I thank Paul Vitanyi and Alexander (Sasha) Shen for useful discussion
    (especially on  section~\ref{sec:relatingCandK}). 
    I thank Mathieu Hoyrup for encouragement to write down these results
    (and for arranging funding).
    }
}

\date{}

\maketitle  

\begin{abstract}
  In~\cite{BauwensCompcomp} a short proof is given that some strings have maximal plain Kolmogorov
  complexity but not maximal prefix-free complexity. 
  We argue that the proof technique is useful to simplify existing proofs and to solve open questions.

  We present a short proof of Solovay's result~\cite{Solovay} relating plain and prefix complexity:
  \begin{eqnarray*}
   \K(x) &=& \C(x) + \CC(x) + O(\CCC(x)) \\
   \C(x) &=& \K(x) - \KK(x) + O(\KKK(x)) \,,
  \end{eqnarray*}
  (here $\CC(x)$ denotes $\C(\C(x))$, etc.).

  We show that there exist $\omega$ such that 
  $\liminf \C(\omega_1\dots \omega_n) - \C(n)$ is infinite and
  $\liminf \K(\omega_1\dots \omega_n) - \K(n)$ is finite, i.e. the infinitely often 
  $C$-trivial reals are not the same as the infinitely often $K$-trivial reals
  (i.e.~\cite[Question~1]{BarmpaliasQuestions}). 
  
  We answer a question from Laurent
  Bienvenu: some $2$-random sequence have a family of initial segments with bounded plain
  deficiency (i.e. $|x| - \C(x)$ is bounded) and unbounded prefix deficiency 
  (i.e. $|x| + \K(|x|) - \K(x)$ is unbounded). 
  
  Finally, we show that
  there exists no monotone relation between probability and expectation bounded randomness
  deficiency, i.e. \cite[Question~1]{GacsTestsInClass}.
\end{abstract}

\section{Introduction}

Plain Kolmogorov complexity $\C(x)$ of a bitstring~$x$ was independently defined 
by Ray Solomonoff~\cite{solomonoffI} and later by 
Andrei Kolmogorov~\cite{Kolmogorov65} as the minimal length of a program that produces~$x$ on a
Turing machine.
In both definitions programs are strings of zeros and ones written on a work tape; the beginning and
end of the program is marked by blanc symbols.
During the execution, the Turing machine (which we call {\em plain} machine) can scan the beginning
and end of the program and use its length as additional information during the computation. 
After the computation, the output string should appear on the work tape, again the beginning and end 
should be marked by blank symbols (see~\cite{LiVitanyi,GacsNotes} for details). 
Kolmogorov complexity on such a machine is called  {\em plain} complexity. It is currently the most
popular notion of Kolmogorov complexity.

A closely related notion of complexity was introduced by
Leonid Levin~\cite{LevinPpi,LevinCK} and Gregory Chaitin~\cite{Chaitin75}
and has many applications in the study of algorithmic randomness.
Imagine a Turing machine on which programs are presented on a separate $2$-symbol input tape.
The tape does not have blanc symbols, only zeros and ones.
During the execution more input is scanned until the machine reaches a halting state, after which 
an output $x$ is defined.  We write $U(p) = x$ if $p$ is the minimal initial segment of the input
tape that contains all scanned cells and if the result of the computation is~$x$.
During the computation, the length of $p$ is no longer available. Programs on such a machine are
also called  {\em self-delimiting}. 
Note that the set of programs on which $U$ halts is prefix-free.
The minimal length of a program outputting $x$ on such a machine 
is called  {\em prefix} complexity $\K(x)$.

Prefix complexity is larger (up to an $O(1)$ constant) than plain complexity 
and the difference is at most $O(\log |x|)$, where $|x|$ denotes the length of $x$. 
%(The complexity of a number $n$ is the complexity of the $n$-bit string $00\dots 0$.)
For many applications this difference is not important.
However, for applications in the theory of algorithmic randomness, often $O(1)$-precise relations  
are used, and often one raises the question what happens when plain and prefix
complexity are exchanged in a result or a definition.
The goal of the paper is two-fold. First, we present a simple proof on a result that relates plain and prefix
complexity. Secondly, we refine a proof-technique (from~\cite{BauwensCompcomp}) 
to build strings where plain and prefix complexity 
behave differently, and apply it to solve three open questions. 
%Readers only interested in the first
%goal can now proceed to sections \ref{sec:prerequisites} and~\ref{sec:relatingCandK}.
\bigskip

Several results are related to one of the
oldest questions in algorithmic randomness, raised by 
Robert Solovay~\cite{Solovay} (see~\cite[page 263]{Downey}). 
The maximal plain complexity of a string of length $n$ is $n + O(1)$ and we say that a string has 
$c$-maximal complexity if $\C(x) \ge |x|-c$. % and  $\K(n) + n + O(1)$. 
Martin-L\"of observed that for no $c$ and no infinite sequence all initial segments $x$ have 
$c$-maximal complexity.
On the other hand, the class of sequences for which some $c$ and infinitely many initial segments
$x$ exist with $\C(x) \ge n - c$ has measure one.
Similar observations hold for prefix complexity, (where the maximal complexity is $n + K(n) + O(1)$).
Solovay's question is whether
the classes of sequences with infinitely often maximal plain and prefix complexity are the
same; in other words, is $\liminf_{x \sqsubset \omega} |x| - \C(x)$ finite 
iff $\liminf_{x \sqsubset \omega} \K(|x|) + |x| - \K(x)$ is finite?

%raised the question whether the infinite sequences that have infinitely
%many initial segments  with maximal prefix and maximal plain complexity are the same; more
%precisely, are the following conditions equivalent: 
%the existence of $c$ and infinitely many prefixes $x$ such that $|x| - \C(x) \le c$ 
%and the existence of $c$ and infinitely many prefixes $x$ such that $\K(|x|) + |x| - \K(x) \le c$. 

To answer this question, Solovay investigated whether there was a monotone relation 
between $\C(\cdot)$ and $\K(\cdot)$. He found that this was approximately the case
by showing
  \begin{eqnarray*}
   \K(x) &=& \C(x) + \CC(x) + O(\CCC(x)) \\
   \C(x) &=& \K(x) - \KK(x) + O(\KKK(x)) \,,
  \end{eqnarray*}
where complexity of a number $n$ is the complexity of the $n$-bit string $00\dots 0$ and
where $\CC(x)$, $\KK(x)$, etc, be short for $\C(\C(x))$, $\K(\K(x))$, etc. 
The proof in~\cite{Solovay} is cumbersome and Joseph Miller~\cite{MillerContrasting} 
made some simplifications using symmetry of information for prefix complexity. 
Here we use this technique to give an even much simpler proof. 
%(see also~\cite{BookShenVereshchagin} for a Russian translation of this proof).
(Readers only interested in this result can directly go to
sections~\ref{sec:prerequisites} and \ref{sec:relatingCandK}.)

\smallskip
Solovay showed that the continuation of the first equation with terms up to $O(\CCCC(x))$ does
not hold. He also showed that maximal prefix complexity implies maximal plain complexity, 
but the reverse is not true:
there exist infinitely many $n$ and $x$ of length $n$ such that $n - \C(x) \le O(1)$ and
\begin{equation}\label{eq:prefixDef}
  \K(n) + n - \K(x) \ge \logg n - O(\loggg n) \,.
  \end{equation}
In~\cite{BauwensCompcomp} a simple proof (and generalizations) are presented.
Here we further develop the proof technique to solve several open questions.
%This result suggests a negative answer to his question (i.e. whether the sequences 
%where $C$ and $K$ are infinitely often maximal are the same). 

Despite this negative result,
Miller~\cite{Miller2randC,Miller2randK} showed a positive answer for Solovay's question:
the sequences that have infinitely many initial segments with maximal plain and prefix complexity
are the same.
The proof is indirect: it shows that both classes coincide with the class of $2$-random sequences, 
i.e. Martin-L\"of random sequences relative to the halting problem (the equivalence of the first class
with $2$-randomness was also shown in~\cite{Nies2rand}).
Miller raised the question whether an (elegant) direct proof exists.
In~\cite{2rand} simple proofs of these equivalences with $2$-randomness are given, but still 
no direct proof. %Let $|x|$ denote the length of $x$. 
It is also shown that
\[
  \liminf_{x \sqsubset \omega} [|x| - \C(x)] = \liminf_{x \sqsubset \omega} [\K(|x|) + |x| - \K(x)] + O(1) \,,
  \]
by showing both sides equal $2$-randomness deficiency (see further). 
Laurent Bienvenu~\cite{personalBienvenuNov2012} asked whether for a $2$-random sequence, 
the initial segments for which plain and prefix-free 
complexity are maximal are the same; more precisely, for $2$-random $\omega$, does there exist $c$ and $d$ 
such that for all $n$: $n-\C(\omega_1\dots\omega_n) \le c$ implies $\K(n) + n - \K(\omega_1\dots\omega_n) \le d$?
(For some $c$ and $d$ the reverse implication is always true.)
We show that this is not the case: for every $3$-random sequence (a subset of the $2$-random
sequences) there are infinitely many initial segments $x$ with $|x| - \C(x) \le O(1)$ for which~\eqref{eq:prefixDef} holds.
This makes the existence of a simple direct proof unlikely.
We refer to  section~\ref{sec:contrastingIn3Random} for the proof of this result.

%In~\cite{BauwensCompcomp} a short proof is given that some strings have maximal plain Kolmogorov
%complexity but not maximal prefix-free complexity. The proof uses three other results:
%\begin{itemize}
%  \item symmetry of information: $\K(x,y) = \K(x) + \K(y|x,\K(x)) + O(1)$,
%  \item equivalence of randomness for strings using plain and conditional prefix complexity: 
%    for all $n$-bit $x$ we have $K(x|n) = n + O(1)$ iff $C(x) = n + O(1)$, 
%  \item Gacs' result that complexity of complexity can be large: for all $n$ there exist an $x$ such
%    that $\K(\K(x)|x) \ge \log n - O(1)$.
%\end{itemize}
%It turns out that the proof technique is useful to simplify existing results and solve open
%problems.

\bigskip
In algorithmic information theory, many relations are known between highly random sequences and
highly compressible sequences~\cite[Section 3.5]{BarmpaliasQuestions}. The second application of our technique 
considers one such class called {\em the infinitely often $K$-trivial sequences}: 
the sequences $\omega$ for which there exist $c$ and infinitely many $n$ such that 
$\K(\omega_1\dots\omega_n) \le \K(n) + c$, i.e. 
\[ 
  \liminf_n \left[ \K(\omega_1\dots \omega_n) - \K(n) \right] \le O(1)
  \] 
This class contains the computably enumerable sequences and the (weakly) $1$-generic sequences. 
%and plays a role in the study of degrees of randomness (see~\cite[Section 3.5]{BarmpaliasQuestions}
%for more details).
Similar observations hold for the infinitely often $C$-trivial sequences, i.e. the sequences 
for which \[\liminf_n [\C(\omega_1\dots \omega_n) - \C(n)] \le O(1)\,. \]
Question 1 in~\cite{BarmpaliasQuestions} asks whether both classes coincide.
We show that this is not the case.

\bigskip
A last application of the proof technique concerns randomness deficiency for infinite sequences. 
Suppose one million zeros are prepended before a random string. The new string is still random, but 
one might argue that it is somehow ``less random''.
Randomness deficiency quantifies the amount of structure in a random sequence (see~\cite[Section 3.6.2]{LiVitanyi}
and~\cite{GacsTestsInClass}).
Let $\mu(\omega)$ denote the uniform measure.
Two closely related notions of deficiency exist in literature. 
\begin{itemize}
  %\item 
  %  A non-negative rational function $f$ on $\{0,1\}^\infty$ is {\em basic} if
  %  $f(\omega)$ is determined by a finite prefix of $\omega$.
  %\item 
  %  A function $f$ into $\overline{\mathbb{R}}^+$, 
  %  is {\em lower-semicomputable} if there exist a uniformly computable series of 
  %  (non-negative) basic functions $f_i$ such that $f = \sum_i \{f_i\}$.
  \item 
    A lower semicomputable\footnote{
	  A non-negative rational function $f$ on $\{0,1\}^\infty$ is {\em basic} if
	  $f(\omega)$ is determined by a finite prefix of $\omega$.
	  A function $f$ into $\overline{\mathbb{R}}^+$, 
	  is {\em lower-semicomputable} if there exist a uniformly computable series of 
	  (non-negative) basic functions $f_i$ such that $f = \sum_i f_i$.
	  }
    function $f:\{0,1\}^\infty \rar \overline{\mathbb{R}}^+$ 
    (i.e. $\mathbb{R}^+$ extended with $+\infty$)  
    is a {\em probability bounded randomness test} if for each $k$ 
    \[
    \mu \left\{ \omega : f(\omega) \ge k \right\} \le k,\,.
    \]
  \item 
    A measurable function $f:\{0,1\}^\infty \rar \overline{\mathbb{R}}^+$ 
    is an {\em expectation bounded randomness test} if 
    \[
    \int_{\{0,1\}^{^\infty}} f(\omega) \text{d}\omega \le 1 \,.
    \]
\end{itemize}
The first notion is inspired by to the notion of confidence in statistical hypothesis testing, 
while the second is closely related, but mathematically more convenient to handle.
There exists a lower semicomputable
expectation bounded test $f_E$ that exceeds any other such test $g$ within a constant factor,
i.e. for all $g$ there exist $c$ such that $g \le cf_E$.  
The logarithm of such a universal test is called {\em expectation
bounded randomness deficiency~$d_E$}.
The deficiency depends on the choice of the universal test, but this choice affects the deficiency
by at most an additive constant.
Similar for probability bounded tests and  {\em probability bounded deficiency~$d_P$}.
Both deficiencies are related: $d_E = d_P + O(\log d_P)$,
and both deficiencies are finite iff the sequence is Martin-L\"of random.
We argue that the relation between plain and prefix complexity is very similar to 
the relation between $d_P$ and $d_E$.

Question 1 in  \cite{GacsTestsInClass} asks whether there exists a monotone
relation between probability bounded deficiency and expectation bounded deficiency that holds
within additive $O(1)$ terms. If this is not
the case then there exist two families of sequences $\omega_i$ and $\omega'_i$ such that
\[
 d_P(\omega_i) - d_P(\omega'_i) \rar +\infty
\]
for increasing $i$, and
\[
 d_E(\omega_i) - d_E(\omega'_i) \rar -\infty ,\,.
\]
In Section \ref{sec:contrastingDeficiencies}, we translate the main proof technique to 
deficiencies and construct such sequences. 
Hence, no monotone relation exists between the deficiencies.

%\bigskip
%We remark that in~\cite{FiniteVanLambalgen} the technique is also applied 
%to disprove a finite version of Van Lambalgen theorem:
%suppose for $x$ and $y$ of length $n$ that $\K(x)$ and $\K(y|x)$ are maximal, 
%more precisely  $n + \K(n)-c$ and $n-c$ for some $c$, does this imply that 
%$\K(x,y)$, $\K(y)$ and $\K(x|y)$ are maximal (i.e. $2n + \K(n) - O(c)$, $n+\K(n) - O(c)$ and $n -
%O(c)$)? The answer is negative. 

\bigskip
The paper is organized as follows: first we discuss two old results which will be used throughout
the paper: Levin's formula relating plain and prefix complexity and Levin's formula for symmetry of
information.
In the next section we present a simple proof for Solovay's formulas relating $C$ and $K$.
All further results in the paper demonstrate different behaviour of $C$ and $K$ and the proofs have
a common structure. 
In  section~\ref{sec:contrastingCandK}, we repeat the simplest such proof by showing that
some strings have maximal plain but non-maximal prefix complexity.
Afterwards, in  section~\ref{sec:infOftenTrivial}, we show that the class of infinitely often $C$ and $K$ trivial sequences are different.
In  section~\ref{sec:contrastingIn3Random},  we show that each $3$-random sequence has infinitely many initial segments with maximal plain
complexity but non-maximal prefix complexity. Finally, in section~\ref{sec:contrastingDeficiencies}, we show that no monotone relationship exists
between plain and prefix randomness deficiency.
Section \ref{sec:relatingCandK}, sections \ref{sec:contrastingCandK}, \ref{sec:infOftenTrivial}, \ref{sec:contrastingIn3Random}, 
and section~\ref{sec:contrastingDeficiencies} can be red independently.

\section{Prerequisites}     \label{sec:prerequisites}

Two results are central in most proofs.
The first is Levin's symmetry of information~\cite{complexityOfComplexity}: for all $x,y$
\[
 \K(x) + \K(y|x,\K(x)) = \K(x,y) \,.
\]
The conditional variant is given by
\[
 \K(x|z) + \K(y|x,\K(x|z),z) = \K(x,y|z) \,.
\]

The second result relates plain and prefix complexity for random strings.
For all $n$-bit $x$:  $\C(x) = n + O(1)$ iff $\K(x|n) = n + O(1)$. 
We will use a more general variant.
\begin{lemma}[Folklore]\label{lem:relateDeficiencies}
  For all $j$ and $x$
  \[
   |j - \C(x)|  =  \Theta \left(\left| j - \K(x|j)\right| \right)
   \]
\end{lemma}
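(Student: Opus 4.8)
The plan is to sandwich $\C(x)$ and $\K(x|j)$ between one another by two matching inequalities, each carrying an error term that is only logarithmic in the relevant deficiency. Writing $D_C = j - \C(x)$ and $D_K = j - \K(x|j)$ (both possibly negative), I would establish
\[
  D_K \ge D_C - O(\log(|D_C|+2)) \qquad\text{and}\qquad D_C \ge D_K - O(\log(|D_K|+2)).
\]
Adding the two gives $|D_C - D_K| \le O(\log(\max(|D_C|,|D_K|)+2))$, which is in fact stronger than the claimed $\Theta$ and yields it once the logarithmic corrections are absorbed into the multiplicative constant.

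For the first inequality (the cheap direction) I would bound $\K(x|j)$ by first describing $\C(x)$ and then appending a shortest plain program. Concretely, a prefix program for $x$ given $j$ consists of a self-delimiting encoding of the integer $D_C = j-\C(x)$ followed by the length-$\C(x)$ shortest plain description of $x$; from $j$ and $D_C$ the decoder recovers $\C(x)$, hence knows exactly how many of the remaining bits to read, and then simulates the plain machine. The self-delimiting code for $D_C$ costs $O(\log(|D_C|+2))$ bits (including a sign bit), so $\K(x|j) \le \C(x) + O(\log(|D_C|+2))$. The point of encoding the difference $D_C$ rather than $\C(x)$ itself is precisely to make the overhead depend on the deficiency instead of on $\log j$.

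For the second inequality I would exploit the defining feature of the plain machine: it may read off the length of its own program. A plain program for $x$ consists of a self-delimiting encoding of the integer $D_K = j-\K(x|j)$ followed by a shortest prefix program $p$ for $x$ given $j$. On an input of total length $L$, the plain machine reads the self-delimiting prefix (of some length $\ell$), sets $b = L-\ell = \K(x|j)$ and $j = b + D_K$, and then simulates the prefix machine on $p$ with auxiliary input $j$. This is legitimate exactly because the plain machine knows $L$, so the overhead is only $O(\log(|D_K|+2))$, giving $\C(x) \le \K(x|j) + O(\log(|D_K|+2))$. I expect this step to be the main obstacle: since $\C(x)$ is unconditional one cannot simply hand $j$ to the decoder, and the naive route of spelling out $j$ in full would introduce an $O(\log j)$ error that is useless when $j$ is large but both deficiencies are small; recovering $j$ from the ambient program length is what keeps the overhead proportional to $\log|D_K|$.

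Once both inequalities are in hand the conclusion is immediate. If either deficiency is large, the logarithmic correction is negligible against it and the two deficiencies agree up to a $(1+o(1))$ factor; if one deficiency is bounded, the inequalities force the other to be bounded as well. In every case $|j - \C(x)| = \Theta(|j - \K(x|j)|)$, with the standard convention that bounded quantities are absorbed into the constant.
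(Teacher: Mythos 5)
Your proof is correct, but it takes a genuinely different route from the paper's. The paper's own proof is a two-line reduction to Levin's folklore formula $\C(x) = \K(x|\C(x)) + O(1)$, which it treats as known: replacing the condition $\C(x)$ by $j$ costs only $O(\log|j-\C(x)|)$ extra bits (one encodes the signed difference), so $\K(x|j) = \C(x) + O(\log|j-\C(x)|)$, and the triangle inequality gives $|j - \K(x|j)| = |j - \C(x)| + O(\log|j-\C(x)|)$, from which the $\Theta$-statement follows by the same case analysis you perform at the end. You instead prove both inequalities from first principles by explicit machine constructions, never invoking Levin's formula. The two arguments are of course related: your second construction --- a plain program consisting of a self-delimiting code for $D_K$ followed by a shortest prefix program, with the plain machine recovering $\K(x|j)$, and hence $j$, from its own input length --- is precisely the mechanism behind the nontrivial direction of Levin's formula, generalized to an arbitrary reference point $j$; indeed, specializing your second inequality to $j = \C(x)$ recovers Levin's formula itself, which matches the paper's remark that the lemma is equivalent to that formula. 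What the paper's route buys is brevity, given that the formula is standard. What your route buys is self-containedness, an explicit view of why the plain machine's access to its own program length is what drives the equivalence, and slightly sharper bookkeeping: you get error $O(\log(|D_C|+2))$ in one direction and $O(\log(|D_K|+2))$ in the other, whereas the paper's reduction expresses both errors in terms of $|j-\C(x)|$ alone --- either form suffices for the conclusion.
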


\begin{proof}
  The Lemma implies Levin's formula 
 \[
 \C(x) = \K(x|\C(x)) + O(1)\,,
 \]
 and in fact, it is equivalent to it:
 %(This also follows from the additivity formula for 
 %$(a,b) = (y,\text{empty string})$ and $(a,b) = (\text{empty string},y)$.)
 for any $j$ it implies $\K(x|j) = C(x)$ up to terms $O(\log |j - \C(x)|)$, and
 by the triangle inequality: 
 \[
   |j - \K(x|j)| = |j - \C(x)| + O\left(\log |j - \C(x)|\right)\,.\qedhere
   \]
\end{proof}

\section{Relating plain and prefix complexity}    \label{sec:relatingCandK}

Recall that $\KK(x)$, $\CC(x)$, etc, are short for $\K(\K(x))$, $\C(\C(x))$, etc.
\begin{theorem}\label{th:SolovayI}
  \begin{eqnarray}
   \K(x) &=& \C(x) + \CC(x) + O(\CCC(x)) \nonumber\\
   \C(x) &=& \K(x) - \KK(x) + O(\KKK(x)) \label{eq:second}\,.
  \end{eqnarray}
\end{theorem}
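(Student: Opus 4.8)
The plan is to derive both identities from two ingredients that are already available: the elementary ``describe the length, then the content'' inequality, and symmetry of information, with Levin's formula (Lemma~\ref{lem:relateDeficiencies}) used to convert conditional complexities back into plain complexity. First I would dispose of the upper bound in the first line. Prefixing a shortest self-delimiting program for the number $\C(x)$ to a shortest plain program for $x$ gives $\K(x) \le \C(x) + \K(\C(x)) + O(1)$, since the prefix machine can decode $\C(x)$, then read exactly that many further bits and hand them to the plain machine (which may use their length). Iterating this inequality on the numbers $\C(x)$ and $\CC(x)$ and closing with $\K(m) \le \log m + O(\log\log m)$ yields $\K(\C(x)) \le \CC(x) + O(\CCC(x))$, and hence $\K(x) \le \C(x) + \CC(x) + O(\CCC(x))$. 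Since $\K(\C(x)) \ge \CC(x)$ trivially, everything reduces to matching lower bounds.

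For the lower bound of the first line I would feed the pair $(x, \C(x))$ into symmetry of information. Using $\K(x \mid \C(x)) = \C(x) + O(1)$ (Levin's formula) to evaluate the relevant conditional complexity up to lower order, symmetry of information gives the master equation
\[
  \K(x) = \C(x) + \K(\C(x)) - \K(\C(x) \mid x, \K(x)) + O(\CCC(x)).
\]
Combined with $\K(\C(x)) = \CC(x) + O(\CCC(x))$ from the first paragraph, the first identity becomes equivalent to the single estimate $\K(\C(x) \mid x, \K(x)) = O(\CCC(x))$. For the second identity I would instead exploit the clean identity $\K(x, \K(x)) = \K(x) + O(1)$ (a shortest prefix program for $x$ exhibits both $x$ and its own length); expanding this pair the other way, symmetry of information gives, with no logarithmic loss,
\[
  \K(x \mid \K(x), \KK(x)) = \K(x) - \KK(x) + O(1).
\]
Thus the second line follows once I show $\K(x \mid \K(x), \KK(x)) = \C(x) + O(\KKK(x))$, after which the two lines are reconciled through $\KK(x) = \CC(x) + O(\CCC(x))$, obtained by applying the estimates above to $\K(x)$ and noting that $\K(x)$ and $\C(x)$ differ by only $O(\CC(x))$.

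I expect the main obstacle to be exactly these two defect estimates, which are two faces of the same phenomenon: recovering $\C(x)$ from $(x, \K(x))$, respectively $x$ from $(\K(x), \KK(x))$, within a third-order error. The naive route, encoding the deficiency $\K(x) - \C(x)$ by its binary length, costs about $\log \CC(x)$ bits, and this can genuinely exceed $\CCC(x)$ when $\CC(x)$ happens to be a highly compressible number, so the logarithmic loss in a plain application of Lemma~\ref{lem:relateDeficiencies} is not good enough. The way I would beat it is to observe that the deficiency is enumerable from below given the conditions and is a priori at most $\CC(x) + O(\CCC(x))$, and then bootstrap: a first, lossy application of the lemma pins the deficiency to within $O(\log \CC(x))$ of $\CC(x)$, after which one re-encodes it as $\CC(x)$ (now available from the conditions) plus a much smaller offset, iterating until the residual error is absorbed into $O(\CCC(x))$. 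Controlling this bootstrap, and in particular preventing the lower bound produced by the lemma from reintroducing a $\log \CC(x)$ term, is the delicate heart of the argument; the surrounding steps are routine manipulations with symmetry of information and the Kraft inequality.
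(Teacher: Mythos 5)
Your reductions are sound and in fact parallel the paper's: symmetry of information applied to the pair $(x,\K(x))$ gives $\K(x|\K(x),\KK(x)) = \K(x) - \KK(x) + O(1)$, so the second identity is equivalent to your defect estimate $\K(x|\K(x),\KK(x)) = \C(x) + O(\KKK(x))$, and the first identity can then be obtained by the $\KK(x) = \CC(x)+O(\CCC(x))$ reconciliation you describe (this is also how the paper derives it). But the defect estimate itself --- which you rightly call the heart of the argument --- is not established by your bootstrap, and the bootstrap as described does not work. First, it is circular: you want to re-encode the deficiency $\K(x)-\C(x)$ as ``$\CC(x)$ plus a small offset,'' but $\CC(x)$ is not available from the conditions $(x,\K(x))$; extracting it would require knowing $\C(x)$, the very quantity being reconstructed, or else invoking $\CC(x)=\KK(x)+O(\KKK(x))$, which is itself a consequence of the identity being proved. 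Second, even granting an anchor, the residual offset costs about $O(\log\log\CC(x))$ bits to write down self-delimitingly, and this is not $O(\CCC(x))$ when $\CC(x)$ is a highly compressible number --- the very objection you raised against the naive encoding recurs one level down, and there is no evident anchor for the next iteration.

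The missing idea is that no encoding of the deficiency is ever needed. Note that Lemma~\ref{lem:relateDeficiencies} is not lossy in the way you suggest: it is a loss-free $\Theta$-relation between the deficiencies $|j-\K(x|j)|$ and $|j-\C(x)|$, valid for every $j$; the logarithmic loss you fear arises only when translating between different conditions. The paper exploits this as follows: set $j = \K(x)-\KK(x)$ and replace the condition pair $(\K(x),\KK(x))$ by the single number $j$. This replacement costs only $O(\KKK(x))$, because the pair can be recovered from $j$ together with a self-delimiting description of the number $\KK(x)$ itself, at cost $\K(\KK(x)) = \KKK(x)+O(1)$ --- one supplies the whole number, not any numerical difference, so no $\log\CC(x)$ term ever appears. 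Symmetry of information then reads $\K(x|j) = j + O(\KKK(x))$, i.e.\ the deficiency $|j-\K(x|j)|$ is $O(\KKK(x))$, and Lemma~\ref{lem:relateDeficiencies} converts this directly into $|j-\C(x)| = O(\KKK(x))$, which is the second identity. This short argument is exactly what your bootstrap is attempting, and failing, to substitute for.
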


\begin{proof}
  Using symmetry of information we have
  \[
  \K(x) = \K(x,\K(x)) = \KK(x) + \K(x|\K(x), \KK(x) ) + O(1)\,.
  \]
  The last term equals $\K(x|\K(x)-\KK(x)\,) + O(\KKK(x))$.
  For $j = \K(x) - \KK(x)$ the equality is
  \[
   j = \K(x|j) + O\left(\KKK(x)\right)\,.
  \]
  Thus $\C(x) = j + O\left(\KKK(x)\right)$ by Lemma~\ref{lem:relateDeficiencies}, i.e. \eqref{eq:second}.
  
  \smallskip
  We obtain the first equation of the theorem from the second by showing that 
  \begin{eqnarray}
    \CC(x) &=& \KK(x) + O(\KKK(x)) \label{eq:CCvsKK} \\
    \KKK(x) &\le& O(\CCC(x)) \label{eq:CCCvsKKK}\,. 
  \end{eqnarray}
  For \eqref{eq:CCvsKK}, note that $a = b - c + O(d)$ implies $\C(a) = \C(b) + O(\K(c) + d)$. Applying
  this to  \eqref{eq:second} we obtain 
  \begin{equation*}%\label{eq:CCvsCK}
   \C(\C(x)) = \C(\K(x)) + O(\K(\KK(x)) + \KKK(x))\,.
 \end{equation*}
  Substituting $x \leftarrow \K(x)$ in \eqref{eq:second} gives
  \[
   \C(\K(x)) = \K(\K(x)) + \KK(\K(x)) + O(\KKK(x))\,.
  \]
  Combining both equations implies \eqref{eq:CCvsKK}.
  
  \smallskip
  It remains to show that \eqref{eq:CCvsKK} implies \eqref{eq:CCCvsKKK}. Using $\K(a) \le
  \K(b) + \K(b-a) + O(1)$:
  \[
   \K(\KK(x)) \le \K(\CC(x)) + \K(\KK(x) - \CC(x)) + O(1) 
  \]
  The first term at the right is bounded by $2\C(\CC(x))+O(1)$. 
  For the second, note that $\K(d) \le O(\log d)$ for any number $d$, 
  hence
  \begin{equation}\label{eq:relateKKKtoCCCprecise}
    \KKK(x) \le 2\CCC(x) + O(\log \KKK(x)) \,,
  \end{equation}
  i.e.  \eqref{eq:CCCvsKKK}.
\end{proof}

\begin{remark}\label{rem:CCCvsKKK}  The proof implies that 
$\K(x) = \C(x) + O(\CC(x))$ and $\KK(x) = \CC(x) + O(\CCC(x))$.
Alexander Shen raised the question whether $\KKK(x) = \CCC(x) + O(\CCCC(x))$? 
This does not hold. The proof is cumbersome and uses a topological argument
from~\cite{ShenTopological}, see appendix~\ref{sec:CCCvsKKK}.\footnote{
  \label{foot:DoubleComplexities}
  For later use in the appendix, note that the proof above also implies
  \[
  \CC(x), \; \mathit{CK}\,(x), \; \mathit{KC}\,(x), \; \KK(x),
  \]
  are all equal within error $O(\CCC(x))$ and error $O(\KKK(x))$. (Indeed, to relate $\KK(x)$ to
  $\mathit{KC}\,(x)$, apply $K(\cdot)$ to \eqref{eq:second}.)
  Moreover, for all $U,V,W,X,Y,Z \in \{C,K\}$ we have that
  $\mathit{UVW}\,(x) \le O\left(\mathit{XYZ}\,(x)\right)$.
  Indeed, by
  applying $\C(a) = \C(b) + O(\log (a-b))$ on the equalities above, 
  we obtain that $\mathit{CYZ}\,(x) = \mathit{CCC}\,(x) + O(\log \CCC(x))$.
  In the same way one shows that 
  $\mathit{KYZ}\,(x) = \mathit{KKK}\,(x) + O(\log \KKK(x))$.
  The result follows now from \eqref{eq:relateKKKtoCCCprecise}. 
  }
\end{remark}

\section{Contrasting maximal plain and prefix complexity}      \label{sec:contrastingCandK}

To get used to the main proof technique for the remainder of this paper, 
we start by showing the subsequent variant of Solovay's theorem.

\begin{theorem}[Solovay~\cite{Solovay}, Bauwens and Shen~\cite{BauwensCompcomp}]\label{th:SolovayII}
There exist infinitely many $x$ such that
$|x| - C(x) \le O(1)$ and $\K(|x|) + |x| - \K(x) \ge \logg |x|-O(1)$.
\end{theorem}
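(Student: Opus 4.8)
The plan is to produce, for infinitely many lengths $n$, a single string $x$ of length $n$ that is plain-random yet whose prefix complexity falls short of the maximum by about $\logg n$. Writing $n^*$ for a shortest prefix program of $n$ (so that knowing $n^*$ is the same, up to $O(1)$, as knowing the pair $(n,\K(n))$), and using that $n=|x|$ is computable from $x$, symmetry of information gives
\[
 \K(x) = \K(n) + \K(x\mid n^*) + O(1)\,.
\]
By Lemma~\ref{lem:relateDeficiencies} with $j=n$, the plain-randomness condition $\C(x)=n+O(1)$ is equivalent to $\K(x\mid n)=n+O(1)$. Hence the whole problem reduces to exhibiting $x$ that is incompressible given $n$ but compressible given $n^*$, with the saving $\K(x\mid n)-\K(x\mid n^*)$ as large as possible. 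Since this saving is at most $\K(n^*\mid n)=\K(\K(n)\mid n)+O(1)$, which is itself at most $\logg n+O(1)$, the target deficiency $\logg n$ is exactly the extremal case.

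First I would fix a length $n$ for which $\K(\K(n)\mid n)\ge \logg n-O(1)$, i.e.\ for which $\K(n)$ is incompressible relative to $n$. Put $L=\lceil\logg n\rceil$ and let $x=s\,r$, where $s$ is the length-$L$ binary encoding of $\K(n)$ and $r$ is a string of length $n-L$ chosen incompressible relative to $(n,\K(n))$. Given $n^*$ one reads off $\K(n)=|n^*|$, reconstructs $s$ at no cost, and then describes $r$ using its (now known) length $n-L$, so that $\K(x\mid n^*)\le n-L+O(1)$ and therefore $\K(x)\le \K(n)+n-\logg n+O(1)$, which is the desired prefix deficiency. Given only $n$, producing the slot $s$ costs $\K(\K(n)\mid n)\ge \logg n-O(1)$ and producing $r$ costs $n-L-O(1)$, so $\K(x\mid n)=n+O(1)$ and $\C(x)=n+O(1)$; the point is that $\K(n)$ is genuine information that cannot be computed from $n$, so the slot does \emph{not} lower the plain complexity. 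Distinct $n$ give distinct $x$, so this yields infinitely many examples.

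The main obstacle is the first step: manufacturing infinitely many lengths $n$ with $\K(\K(n)\mid n)$ maximal up to $O(1)$. I would obtain such $n$ by a counting argument on the complexity function; the delicate part is the error term, since a naive count yields only $\K(\K(n)\mid n)\ge \logg n - O(\loggg n)$, matching the weaker bound~\eqref{eq:prefixDef}, whereas the clean $-O(1)$ of the statement requires more care. An equivalent way to organize the construction, which I would run in parallel to double-check the constants, is through the universal lower-semicomputable semimeasure $\mathbf m$: assign to the strings of length $n$ a boosted weight of order $2^{-n}\,\mathbf m(n)\,\log n$, so that any plain-random string of that length automatically inherits $\K(x)\le n+\K(n)-\logg n+O(1)$. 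This formulation exposes the same difficulty from a different angle: the boost factor $\log n = 2^{\logg n}$ makes $\sum_n \mathbf m(n)\log n$ diverge unless the weight is concentrated on lengths $n$ of high complexity (where $\mathbf m(n)$ is small), and selecting such lengths in a lower-semicomputable fashion is precisely what must be handled with care.
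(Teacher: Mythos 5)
Your construction --- concatenating a binary encoding $s$ of $\K(n)$ with a string $r$ chosen incompressible given $(n,\K(n))$, then evaluating $\K(x|n)$ and $\K(x|n^*)$ separately via symmetry of information and transferring to $\C(x)$ through Lemma~\ref{lem:relateDeficiencies} --- is sound, and it is essentially the proof of \cite[Corollary 6]{BauwensCompcomp}, which this paper cites as its predecessor and reproduces in its appendix. The paper's own proof takes a shortcut: it sets $x = B(n)$, the longest-running plain program of length at most $n$, so that $\C(x) = |x| + O(1)$ is immediate and, crucially, $\K(n)$ becomes computable from $(n,B(n))$; then a single symmetry-of-information computation, $n = \K(B(n)|n) = \K(\K(n)|n) + \K(B(n)|\K(n),\K(\K(n)|n),n) + O(1)$, gives $\K(B(n)|n,\K(n)) \le n - \logg n + O(1)$ without hand-building $x$ or separately verifying its plain incompressibility. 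Both routes rest on the same pivot: a supply of lengths $n$ with $\K(\K(n)|n) \ge \logg n - O(1)$.

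That pivot is where your proposal has a genuine gap. You correctly identify it as the main obstacle, but you leave it unresolved: as you note yourself, a counting argument yields only $\K(\K(n)|n) \ge \logg n - O(\loggg n)$, and your alternative formulation via the universal semimeasure runs into the same difficulty. With only that weaker bound, your construction proves exactly \eqref{eq:prefixDef}, i.e.\ Solovay's original estimate, and not the $-O(1)$ bound claimed in Theorem~\ref{th:SolovayII}. The missing ingredient is not a more careful count; it is Theorem~\ref{th:GacsTight}, the tight version of G\'acs' complexity-of-complexity theorem, which is the main technical result of \cite{BauwensCompcomp} and is invoked here as a black box (together with Lemma~\ref{lem:GacsTight}, which converts its conclusion into $\K(\K(n)|n) = \logg n + O(1)$). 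So your outline has the right shape, but the step you defer to ``more care'' is precisely the hard theorem separating the $-O(1)$ statement from the $-O(\loggg n)$ one; without citing or proving it, the proposal does not establish the theorem as stated.
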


%  \begin{proof}
%    Let $n$ be as in Theorem~\ref{th:GacsTight} and 
%    let $x$ be the lexicographically first string of length $n$ in the set. 
%    We show that $x$ can be computed from $B(n+c)$ for some constant $c$, and this suffices because 
%    we know from the proof above that $\K(BB(n+c)) \le n + \K(n) - \logg n$.
%
%    Consider a list of all strings of length $n$ and remove the strings outside the set using
%    an enumeration of its complement. The moment the last string was removed can be computed with 
%    a program of length $n + O(1)$ on a plain machine (by the total number of removed strings
%    prepended with zeros to have an $n$-bit number). Thus, this moment must be before $B(n+c)$ for
%    large $c$. 
%  \end{proof}
%

The main technique is to combine the two results from Section~\ref{sec:prerequisites}
with a third result: Peter G\'acs' quantification of incomputability 
of Kolmogorov complexity~\cite{complexityOfComplexity}.
He showed that for all lengths, there are $x$ 
such that $\K(\K(x)|x)$ is close to $\log |x|$ (and similar for plain complexity); 
if complexity were computable, then this would be bounded by $O(1)$.
The following tight variant from~\cite{BauwensCompcomp} will be used:
\begin{theorem}\label{th:GacsTight}
  For some $c$ and all $l$ there exist an $n$ such that $\log n = 2^l$, $\K(n) \ge (\log n)/2$ 
  and $\K(\K(n)|n) \ge l - c$. 
  %(Also the conditional variant holds.)
\end{theorem}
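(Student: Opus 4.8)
The plan is to read this as the tight form of G\'acs' complexity-of-complexity bound at the length scale $N := \log n = 2^l$, so that $l = \log N$, and to prove it by contradiction. First I would record why this scale is the natural one. Since $N = 2^l$ is a power of two it is a simple number, $\K(N) = O(\log l)$, and therefore every integer $n$ of bit length $N$ satisfies $\K(n) \le N + \K(N) + O(1) = N + O(\log l)$. Consequently, for every $n$ with $\K(n) \ge N/2$ the value $\K(n)$ is an integer in the interval $[N/2,\, N + O(\log l)]$, i.e. a number of at most $l+1$ bits. This already yields the matching upper bound $\K(\K(n)|n) \le \K(\K(n)) \le l + O(\log l)$ and explains why $l - c$ is the right target: for a suitable $n$, pinning down the $\approx l$ bits of $\K(n)$ from $n$ must be essentially as hard as possible. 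I would also note that all but a $2^{1-N/2}$ fraction of the $2^{N-1}$ integers of bit length $N$ have $\K(n) \ge N/2$, so this side condition holds for almost all candidates and the assertion is non-vacuous.

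For the lower bound I would argue by contradiction. Fix $l$, put $N = 2^l$ and $k = l - c$, and suppose that every $n$ of bit length $N$ with $\K(n) \ge N/2$ satisfies $\K(\K(n)|n) < k$. Then, given such an $n$, the value $\K(n)$ lies in the $n$-enumerable set $G_n = \{\, j : \K(j|n) < k \,\}$, which has fewer than $2^k = N 2^{-c}$ elements; thus the $\approx l$ bits of $\K(n)$ would be almost entirely determined by $n$. The strategy, following G\'acs, is to contradict the incompressibility of halting information at level $l$. The value $\K(n)$ records how far the enumeration of $\K$ from above must run before $n$ reaches its final complexity, and across the bit-length-$N$ integers these values carry $\approx l$ independent bits of the halting problem (equivalently of $\Omega$ read to precision $l$). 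If all of them were guessable from $n$ within $k < l$ bits, then running the approximation of $\K$ from above simultaneously for all bit-length-$N$ integers and harvesting the guessed levels would let one reconstruct those halting bits from too little information; a careful accounting of the total cost then yields a description of a halting-information string shorter than its complexity, the desired contradiction. This produces an $n$ with both $\K(n) \ge N/2$ and $\K(\K(n)|n) \ge l - c$, and translating through the conventions $\K(n) = \K(0^n)$ and $\log n = 2^l$ gives the three asserted properties.

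The hard part is exactly this accounting, and it is the real content of the theorem. The guessing set $G_n$ depends on $n$, so one cannot name $\K(n)$ by its position in $G_n$ without already possessing $n$; the per-object saving of $l - k = c$ bits has to be organised globally, over all candidates at once, and accumulated by a Kraft / a priori probability bookkeeping into a single object whose description genuinely drops below the incompressibility threshold. The naive version of this accounting loses an additive $O(\log l)$ (the overhead of locating the relevant object and of specifying the scale), which only yields the weaker bound $\K(\K(n)|n) \ge l - O(\log l)$. Bringing the loss down to an additive constant $c$ is the second delicate point, and it is what forces the choice $N = 2^l$: only for so simple a length does the whole range of attainable complexities stay inside $N + O(\log l)$, keeping $\K(n)$ an $(l + O(1))$-bit object throughout and letting the bookkeeping close with a uniform constant.
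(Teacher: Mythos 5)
You have not given a proof; you have given a correct framing followed by an explicit admission that the core step is missing. Your first paragraph (since $N=2^l$ is simple, $\K(n)\le N+O(\log l)$, so $\K(n)$ is an $(l+1)$-bit number; most $n$ of bit length $N$ satisfy $\K(n)\ge N/2$; hence $l-c$ is essentially the maximal possible value) is correct but routine, and the contradiction setup is also fine: if the theorem fails with constant $c$, then for every candidate $n$ the value $\K(n)$ lies in the set $G_n=\{j:\K(j|n)<l-c\}$, which is enumerable given $n$ and has fewer than $2^{l-c}$ elements. But everything after that --- ``harvesting the guessed levels'', ``a careful accounting of the total cost'', the ``Kraft / a priori probability bookkeeping'' --- is a placeholder standing where the proof should be, and you say so yourself (``the hard part is exactly this accounting, and it is the real content of the theorem''). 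Membership of $\K(n)$ in a small $n$-enumerable set is not, by itself, anywhere near a contradiction: the approximation of $\K$ from above acts as an adversary that can react to any construction. Whenever one tries to force some $n$ to have its final complexity outside the current $G_n$, the enumeration can respond by inserting the new value of $\K(n)$ into $G_n$ at a cost of only $2^{-(l-c)}$ of weight, and it can also decrease $\K(n)$ itself, moving the target. A proof must identify a bounded resource that this adversary exhausts, and no such mechanism appears in your sketch; as written it does not even establish the weaker bound $\K(\K(n)|n)\ge l-O(\log l)$ that you attribute to the ``naive version''.

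For comparison: the paper itself contains no proof of this theorem --- it is imported from the cited work of Bauwens and Shen, where it is obtained by a game-theoretic argument. There the statement is recast as the existence of a winning strategy in a game against an adversary who may decrease (conditional) complexity values subject to Kraft-type weight budgets, and the tight constant --- losing $O(1)$ rather than the $O(\log l)$ of G\'acs' original argument --- comes from the design of that strategy, not merely from the choice of length scale. This is why your closing claim, that taking $N=2^l$ is ``what lets the bookkeeping close with a uniform constant'', is an unsupported assertion: simplicity of the length is needed (it removes additive losses of order $\K(l)$), but it is far from sufficient, and the actual work is precisely the strategy and its weight accounting, which your proposal leaves blank. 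Likewise, the assertion that the values $\K(n)$ over bit-length-$N$ integers ``carry $\approx l$ independent bits of the halting problem'' is a heuristic gloss on G\'acs' theorem, not a lemma you may invoke; if you want a proof along those lines, that claim itself is the thing you would have to prove.
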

\begin{lemma}\label{lem:GacsTight}
  If $n$ satisfies the conditions of Theorem~\ref{th:GacsTight}, then
  \[
    \logg n = \log \K(n) + O(1) = \K(\K(n)|n) + O(1) \,.
  \]
\end{lemma}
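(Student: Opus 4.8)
The plan is to show that all three quantities in the displayed chain equal $l + O(1)$, where $l$ is the parameter from Theorem~\ref{th:GacsTight}. The first condition $\log n = 2^l$ immediately gives $\logg n = \log(\log n) = \log 2^l = l$ exactly, so it remains to establish $\log \K(n) = l + O(1)$ and $\K(\K(n)|n) = l + O(1)$.

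For $\log \K(n)$ I would sandwich $\K(n)$ between two powers of two. The second condition $\K(n) \ge (\log n)/2 = 2^{l-1}$ supplies the lower bound $\log \K(n) \ge l - 1$. For the upper bound I invoke the standard estimate $\K(n) \le \log n + O(\log\log n)$; since $\log n = 2^l$ this reads $\K(n) \le 2^l + O(l)$, and as the additive $O(l)$ term is negligible next to $2^l$, taking logarithms yields $\log \K(n) \le l + O(1)$. Combining the two bounds gives $\log \K(n) = l + O(1)$.

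For $\K(\K(n)|n)$ the lower bound $\K(\K(n)|n) \ge l - c$ is exactly the third condition of Theorem~\ref{th:GacsTight}, so the only real work is the matching upper bound $\K(\K(n)|n) \le l + O(1)$. Here I would exploit that $n$ determines $l$: from $n$ one computes $\log\log n = l$, and by the estimate above $\K(n)$ is a number below $2^l + O(l)$, hence a binary string of length $l + O(1)$. Given $n$, and thus $l$, one can describe $\K(n)$ by a code of fixed length $l + O(1)$; such codes are prefix-free because they all share the same length, so $\K(\K(n)|n) \le l + O(1)$. Together with the lower bound this proves $\K(\K(n)|n) = l + O(1)$, closing the chain.

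The one point that requires care is this last upper bound: one must check that conditioning on $n$ makes both $l$ and the code length available at no cost, so that a fixed-length—and therefore self-delimiting—description of $\K(n)$ of length $l + O(1)$ is legitimate. Apart from that, the argument is just bookkeeping on how the additive $O(\log\log n) = O(l)$ slack in the standard complexity bound for $n$ collapses to $O(1)$ once a logarithm is taken; this collapse is precisely why the first two hypotheses of Theorem~\ref{th:GacsTight} are stated in terms of $2^l$ and $(\log n)/2$.
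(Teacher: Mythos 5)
Your proof is correct and takes essentially the same route as the paper: the left equality by sandwiching $\K(n)$ between $(\log n)/2$ and $\log n + O(\log\log n)$, and the right equality by combining the lower bound from Theorem~\ref{th:GacsTight} with the upper bound $\K(\K(n)|n) \le \logg n + O(1)$. Your fixed-length encoding of $\K(n)$, whose length $l+O(1)$ is computable from $n$, is precisely the paper's appeal to $\K(i|\log i) \le \log i$ applied to $i = \K(n)$, and you correctly flag the one point needing care (that the code length must be available from the condition).
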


\begin{proof}
Indeed, dropping additive $O(1)$ terms, the left equality follows from
\[
  \logg n \le \log ((\log n)/2) \le \log K(n) \le \log (2\log n) \le \logg n \,.
  \] 
It remains to show that $\K(\K(n)|n) \le \logg n$.
Indeed, $\K(\K(n)|n) \le \K(\K(n) | \logg n)$. 
and using $\logg n = \log \K(n)$ this follows from  $\K(i|\log i) \le \log i$.\footnote{
   \label{foot:Gacs} For the proof in the appendix note that this argument implies
   $\K(\K(n)|\logg n) = \logg n$. By Lemma~\ref{lem:relateDeficiencies} this implies 
   $\C(\K(n)) = \logg n$.
   }
\end{proof}

\bigskip

We informally explain why some strings have maximal plain complexity but non-maximal prefix
complexity. %consider a string $x$ of length $n$ with plain complexity $n + O(1)$. 
There exist plain machines $U$ for which a string $w$ exist such that $U(wx) = x$ for all $x$.
If $x$ has $O(1)$-maximal plain complexity, then $wx$ is an $O(1)$-shortest
program for $x$.
In a similar way, there exist a prefix machine $V$ 
such that for some $w$ we have $V(wx|\,|x|) = x$ for all $x$;
indeed, $V$ just copies the input from the program tape and uses 
the condition $|x|$ to know when to stop this operation.

If the length of $x$ is not available in the condition, 
no such trivial programs might exist.
To decide when to halt the copying procedure,
the length of $x$ must somehow be represented 
in the program in self-delimited form.
If the length of the program is minimal (within an $O(1)$ constant), 
this encryption of the length should also be minimal.
Mathematically, this corresponds to the following observations: $\K(x) = \K(n,x)$, (here and below we
omit $O(1)$ terms); and by symmetry of information
\[
 \K(n,x) = \K(n) + \K(x|n,\K(n)) \,.
\]
Thus, any shortest program for $x$ can be reorganized into a concatenation of two self-delimiting programs: 
the first computes $n$ and the second uses $n$ and the length of the first program to compute $x$.
The prefix deficiency is $\K(n) + n - \K(x) = n - \K(x|n,\K(n))$ and this is different from the
plain deficiency which is close to $n - \K(x|n)$ by  Lemma~\ref{lem:relateDeficiencies}. 
This explains why small prefix deficiency implies small plain deficiency, but not vice
versa.  In particular the deficiencies can only be different if $\K(\K(n)|n)$ is large, 
and this might indeed happen because of  Theorem~\ref{th:GacsTight}.

For appropriate $n$ the discussion explains how we construct $x$, it should contain $\K(n)$ 
and then be filled up further with bits independent from $n$ and $\K(n)$ until the plain complexity
is $n$. This is the approach in~\cite{BauwensCompcomp}, here we take advantage of the fact that
the program with largest computation time of length at most $n$ can also compute $\K(n)$ from $n$.
The proof below is even shorter than that of~\cite[Corrolary 6]{BauwensCompcomp}.

\begin{proof}
  As discussed above, we choose $n$, the length of $x$, such that 
  \begin{equation}\label{eq:compOfComp}
   \K(\K(n)|n) = \logg n +O(1)\,.
  \end{equation}
  By Theorem~\ref{th:GacsTight} and Lemma~\ref{lem:GacsTight}, there exist infinitely many such $n$. 
  Let $x = B(n)$ be the program of length at most~$n$ with maximal running time
  on a plain machine.  We drop $O(1)$ terms. Note that $\C(B(n)) = n = |B(n)|$.
  It remains to show $\K(B(n)) \le n + \K(n) - \logg n$ and this follows from
  \[
    \K(B(n)|n,\K(n)) \le n - \logg n
    \]
  (see above or note that $\K(B(n)) = \K(n,B(n)) = \K(n) + \K(B(n)|n,\K(n))$).
  %By  Lemma~\ref{lem:relateDeficiencies}, $n = \C(B(n)) = \K(B(n)|n)$. 
  From $n$ and $B(n)$ we can compute $\K(n)$, thus $n = \C(B(n)) = \K(B(n)|n)$ also equals 
  \[
      \K(\K(n),B(n)|n) \\
      = \K(\K(n)|n) + \K(B(n)|\K(n), \K(\K(n)|n), n)\,.
      \]
  Applying  \eqref{eq:compOfComp} twice implies 
  $n = \logg n + \K(B(n) | \K(n),n)$. 
\end{proof}

\begin{remark}
  As a corollary it follows that 
   $\K(x) = \C(x) + \CC(x) + \CCC(x) + O(\CCCC(x))$ is false.
  To show it contradicts Theorem~\ref{th:SolovayII} note that $\CCCC(x) \le O(\loggg(n)$.
  Let $x$ satisfy the conditions of the theorem and choose $y$ of length~$n$ with maximal plain and prefix
  complexity.  Now $\K(x) - \K(y) \ge \logg n - O(\loggg n)$.

  For similar reasons the following inequality is not an equality
  \[
   \K(x) \le \K(\C(x)) + \C(x)\,, \\
   \]
   see also Remark \ref{remark:openQuestion} below.
\end{remark}

\begin{remark}
  Miller generalized Solovay's theorem~\cite{MillerContrasting}. The proof above also 
  implies this generalization.
  
  \begin{theorem*}
    If a co-enumerable set (i.e. the complement can be algorithmically enumerated) of strings 
    contains a string of each length, then it also contains infinitely many strings $x$ such that 
    $K(|x|) + |x| - \K(x) \ge \logg |x| - O(1)$.
  \end{theorem*}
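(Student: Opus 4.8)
The plan is to follow the proof of Theorem~\ref{th:SolovayII}, replacing the busy beaver $B(n)$ by a suitable string of the co-enumerable set $S$. First I would fix a computable enumeration of the complement of $S$ and, exactly as above, restrict to lengths $n$ satisfying Theorem~\ref{th:GacsTight}, so that $\K(\K(n)|n) = \logg n + O(1)$. For \emph{any} string $x$ of length $n$, symmetry of information gives $\K(n)+n-\K(x) = n-\K(x|n,\K(n))+O(1)$, and repeating the manipulation of the proof above one finds
\[
 \K(n)+n-\K(x) = \bigl(n-\K(x|n)\bigr) + \bigl(\logg n - \K(\K(n)|n,x,\K(x|n))\bigr) + O(1).
\]
Both bracketed terms are $\ge -O(1)$, so the prefix deficiency is $\ge \logg n - O(1)$ as soon as the survivor $x$ is plainly compressible (making the first bracket $\approx \logg n$) \emph{or} lets one compute $\K(n)$ from $(n,x)$ (making the second bracket $\logg n - O(1)$). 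It therefore suffices to produce, for infinitely many such $n$, a string of $S$ of length $n$ of one of these two kinds.

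The first (favorable) case is when few strings of length $n$ survive, and here a Kraft--Chaitin argument settles matters. I would allot weight $\mathbf{m}(n)\asymp 2^{-\K(n)}$ to length $n$ and wait until the number of not-yet-enumerated strings of length $n$ drops to at most $2^{\,n-\logg n}$; at that moment I describe each of them by a shortest program for $n$ followed by an index of $n-\logg n$ bits. The total weight spent is $\le\sum_n \mathbf{m}(n)\le 1$, so each such $x$ satisfies $\K(x)\le \K(n)+n-\logg n+O(1)$. As a genuine survivor is always among the remaining strings, this shows that whenever $S$ contains at most $2^{\,n-\logg n}$ strings of length $n$, \emph{every} survivor of that length has prefix deficiency $\ge \logg n - O(1)$.

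The hard part is the opposite case, where more than $2^{\,n-\logg n}$ strings of length $n$ survive for every large $n$ of Theorem~\ref{th:GacsTight}: the weight argument can then no longer isolate a survivor, and, as already for $S$ equal to all strings, the witnesses must be sought among the many plainly incompressible survivors, in the role played by $B(n)$ above. I expect this to be the main obstacle and would attack it by contradiction: if every survivor of every large length had prefix deficiency below $\logg n - c$, then the enumerable set $\overline{S}$ would cover all strings $x$ of length $n$ with $\K(x)\le \K(n)+n-\logg n+c$ while still omitting a string of each length, and such an enumerable cover locates the threshold $\K(n)+n$ closely enough to yield a description of $\K(n)$ from $n$ of length below $\logg n$, contradicting $\K(\K(n)|n)\ge \logg n - O(1)$. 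Converting the cover into a short enough description of $\K(n)$ is the delicate point, and it is exactly here that G\'acs' quantitative incomputability of complexity is indispensable --- just as the running time of $B(n)$ was used above to recover $\K(n)$ from $n$.
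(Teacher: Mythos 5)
Your deficiency decomposition is correct, and your first case (lengths $n$ at which $S$ has at most $2^{\,n-\logg n}$ survivors) is a complete argument: the prefix machine that reads a shortest program for $n$ followed by an $(n-\logg n)$-bit index, waits until the number of not-yet-removed strings of length $n$ drops below $2^{\,n-\logg n}$, and prints the indexed one, gives $\K(x)\le \K(n)+n-\logg n+O(1)$ for every survivor of such a length. The genuine gap is your second case, which you flag but do not close --- and it is the \emph{main} case: already for $S$ equal to all strings, or any co-finite $S$, the first case never occurs. Your proposed contradiction (``an enumerable cover of all $\logg n$-compressible strings that still omits a string of each length yields a description of $\K(n)$ from $n$ of length below $\logg n$'') is not substantiated: $\overline{S}$ is only enumerable and $\K$ is only upper semicomputable, so nothing in the joint enumeration ever signals that all strings with $\K(x)\le \K(n)+n-\logg n+c$ have appeared in $\overline{S}$, and without such a signal there is no evident way to extract $\K(n)$. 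As it stands the proposal does not prove the theorem.

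The missing idea, which is how the paper argues, removes the case split entirely and also bypasses the decomposition: take as witness the lexicographically first string $x$ of length $n$ in $S$, and show that $x$ is computable from $B(n+c)$. The key observation is about the \emph{settling time} $T_n$, the moment at which the last length-$n$ string outside $S$ is enumerated into $\overline{S}$: since $S$ contains a string of each length, the total number of removed strings of length $n$ is an $n$-bit number, and from $n$ and this count one can run the enumeration until the count is reached; hence $T_n$ has plain complexity at most $n+O(1)$, and so $T_n\le BB(n+c)$ for a suitable constant $c$. Consequently, from $B(n+c)$ one can compute its own running time $BB(n+c)$, run the enumeration of $\overline{S}$ that long, and output the lexicographically first non-removed string of length $n$, which is exactly $x$. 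Therefore $\K(x)\le \K(B(n+c))+O(1)\le n+\K(n)-\logg n+O(1)$, the last bound being the one already established in the proof of Theorem~\ref{th:SolovayII}. This settling-time argument is precisely the replacement for your unproven step: instead of trying to read $\K(n)$ off the cover $\overline{S}$, one uses the fact that $B(n+c)$ already dominates everything $\overline{S}$ will ever enumerate at length $n$.
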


  This theorem also implies that the set of strings with maximal prefix complexity is not co-enumerable.
\end{remark}

  \begin{proof}
    Suppose $n$ satisfies the conditions of Theorem~\ref{th:GacsTight}. 
    Let $x$ be the lexicographically first string of length $n$ in the set. 
    We show that $x$ can be computed from $B(n+c)$ for some constant $c$, and this suffices because 
    we know from the proof above that $\K(BB(n+c)) \le n + \K(n) - \logg n + O(c)$.

    Consider a list of all strings of length $n$ and remove the strings outside the set using
    an enumeration of its complement. The moment the last string was removed can be computed with 
    a program of length $n + O(1)$ on a plain machine (by the total number of removed strings
    prepended with zeros to have an $n$-bit number). Thus, this moment must be before $B(n+c)$ for
    large $c$. 
  \end{proof}

\begin{remark}  
  The proof above can be used to contrast 
  {\em computational depth} with plain and prefix complexity. In~\cite[Tentative\footnote{
    Although it was called ``tentative'' definition, this version is simpler than the others 
    and is more often used in literature.
    }
  definition 1]{Bennett} the computational depth of a string $x$
  with precision $c$ is given by the minimal computation time 
  of a plain program for $x$ of length at most $\C(x) + c$:
  \[
  \depth_{C,c}(x) = \min \left\{t: |p| \le \C(x) + c \text{ and } U(p) = x \text{ in $t$ steps} \right\}\,.
  \]
  In a similar way, computational depth $\depth_{K,c}(x)$ with prefix machines can be
  defined.\footnote{ 
      We assume in all these definitions that the machine $U$ is  universal in the sense that for each other machine $V$ there 
      exist $w$ such that $U\left( wp \right) = V(p)$ each time $V(p)$ is defined and that simulating
      $V$ by $U$ in this way increases the computation time by a computable function.
      }
  With this assumption it follows easily that there exist a computable $f$ 
  such that $\depth_{K,c+2\log |x|}(x) \le f(\depth_{C,c}(x))$ and that $\depth_{C,c+2\log |x|}(x) \le f(\depth_{K,c}(x))$ 
  for $x$ of large length.
  The subsequent proposition shows that with higher precision, the equivalence is not possible. Let
  $BB(n)$ be the maximal computation time of a program of length at most $n$ on a plain machine
  (i.e. the computation time of $B(n)$).
\end{remark}

  \begin{proposition*}\label{prop:compareDepth}
    There exist a $c$ and infinitely many $x$ such that 
    $\depth_{C,c}(x)$ is bounded by a computable function of $x$ (and in fact bounded by a constant for an 
    appropriate universal machine) and $\depth_{K,\logg |x| - c}(x)$ 
    exceeds $BB(|x|-c)$.
  \end{proposition*}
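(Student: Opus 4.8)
The plan is to reuse the construction from the proof of Theorem~\ref{th:SolovayII}. Fix the constant $c$ and choose $n$ satisfying the conditions of Theorem~\ref{th:GacsTight}, so that $\K(\K(n)|n) = \logg n + O(1)$ by Lemma~\ref{lem:GacsTight}, and set $x = B(n)$. As established there, $\C(x) = n = |x|$ and $\K(x) = \K(n) + n - \logg n + O(1)$. In particular the length budget allowed in $\depth_{K,\logg n - c}(x)$ equals $\K(x) + \logg n - c = \K(n) + n - c + O(1)$, which falls short by $\Theta(c)$ of the cost $\K(n) + n + O(1)$ of the naive self-delimiting program that prints $x$ verbatim after a prefix code for its length $n$. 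The plain bound is then immediate: for a universal plain machine $U$ with a copy word $w$ (so $U(wy) = y$ for all $y$) the program $wx$ has length $n + O(1) \le \C(x) + c$ and outputs $x$ in time $O(|x|)$, indeed in $O(1)$ steps for a machine that reads its input off directly. Hence $\depth_{C,c}(x)$ is bounded by a computable function of $x$, and by a constant for such $U$.

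It remains to bound $\depth_{K,\logg n - c}(x)$ from below, that is, to show that every prefix program $p$ with $V(p) = x$ and $|p| \le \K(x) + \logg n - c$ runs for more than $BB(n-c)$ steps; equivalently, the time-bounded prefix complexity satisfies $\K^{BB(n-c)}(x) > \K(n) + n - c$. The guiding dichotomy is that a program producing $x = B(n)$ either stores $x$ as data or recomputes it. Recomputing $B(n)$ requires the halting behaviour of all programs of length $\le n$, which is revealed only after running the busy beaver to time $BB(n) > BB(n-c)$; so a program halting within $BB(n-c)$ steps cannot recover $B(n)$ by computation and must carry $x$ essentially verbatim. A verbatim prefix program, being self-delimiting, must in addition describe the length $n$, costing at least $\K(n)$ further bits, for a total of at least $n + \K(n) > \K(n) + n - c$, which exceeds the budget and gives the claim. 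To turn this into a proof I would run $B(n-c)$ as a clock to dovetail, within time $BB(n-c)$, all prefix programs of length $\le \K(n) + n - c$, and argue that the finite halting information for lengths $\le n-c$ (an $(n-c+O(1))$-bit quantity, exactly as in the co-enumerable remark above) together with the budgeted program bits cannot determine $x = B(n)$, whose plain complexity is $n$.

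The main obstacle is precisely this last step: rigorously lower-bounding the time-bounded prefix complexity $\K^{BB(n-c)}(x)$, i.e.\ formalising ``within time $BB(n-c)$ the program must store $x$ literally.'' Unlike the bound $\C(x) = n$, which forbids any short description at all, here the time bound is essential, since the genuine shortest program (of length $\K(x)$) does output $x$, only slowly; so no contradiction can be derived from description length alone. The delicate point is to convert the budget deficit of $\Theta(c)$ bits against the literal program into a running-time lower bound of $BB(n-c)$, while controlling the self-delimiting overhead so that the $\logg n$ of slack in the budget is absorbed and leaves no room for a fast program. I expect this conversion, rather than the choice of $n$ or the plain-depth bound, to be where the real work lies.
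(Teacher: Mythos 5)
There is a genuine gap, and you have located it yourself: the entire content of the proposition is the prefix-depth lower bound, and your proposal leaves exactly that step as an acknowledged ``obstacle.'' The store-or-recompute dichotomy is not a proof, and the route you sketch for closing it does not work for $x = B(n)$. The counting bound (at most $O(2^{n-d})$ strings of length $n$ have $\K \le n + \K(n) - d$) controls how \emph{many} strings possess a fast short prefix program, but it says nothing about whether the particular string $B(n)$ is one of them. Your fallback --- dovetail all prefix programs of length $\le n+\K(n)-c$ for $BB(n-c)$ steps and argue the result cannot be $B(n)$ --- runs into the problem that identifying $B(n)$ inside that enumerated set gives a description of $B(n)$ \emph{conditional on} $B(n-c)$ (needed as the clock), and since $\C(B(n-c)) \approx n-c$ this yields a total description length of roughly $2(n-c)$, far above $n = \C(B(n))$; no contradiction with incompressibility of $B(n)$ can be extracted. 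I see no way to prove that $B(n)$ itself lacks a fast short prefix program, and the paper never attempts to.

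The paper's proof sidesteps this by changing the string: rather than $x = B(n)$, it takes $x$ to be the lexicographically first $n$-bit string such that (i) $\C(x) \ge n-2$ and (ii) \emph{no} self-delimiting program of length $\le n + \K(n) - c$ outputs $x$ within $BB(n)$ steps. Such an $x$ exists purely by counting: condition (ii) excludes only $O(2^{n-c})$ strings, while most strings satisfy (i). So the property you were struggling to prove for $B(n)$ now holds \emph{by construction}. The only thing that must then be verified is that this diagonalized $x$ still has a short (slow) prefix description, and this is where $B(n)$ re-enters: $x$ is computable from $B(n+O(1))$ (running $B(n)$ gives $BB(n)$ and the halting data needed to check (i) and (ii)), hence $\K(x) \le \K(B(n)) + O(1) \le n + \K(n) - \logg n + O(1)$ by the proof of Theorem~\ref{th:SolovayII}, for $n$ as in Theorem~\ref{th:GacsTight}. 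Then any program admissible for $\depth_{K,\logg n - e}(x)$ has length at most $\K(x) + \logg n - e \le n + \K(n) + O(1) - e$, which for large $e$ is below the threshold $n + \K(n) - c$ of condition (ii), so every such program runs longer than $BB(n) \ge BB(n-c)$. Your choice of $n$, the bound $\K(B(n)) \le n + \K(n) - \logg n + O(1)$, and the plain-depth half of the argument are all correct and coincide with the paper's; what is missing is precisely this diagonalization idea, which converts the hard time-bounded incompressibility claim into a free consequence of counting.
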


\begin{proof}
  Consider the proof of Theorem~\ref{th:SolovayII}. 
  Rather than choosing $x$ to be $B(n)$, we fix some appropriate $c$ (see further), 
  and choose $x$ to be the lexicographically first $n$-bit string such that $\C(x) \ge n-2$ and 
  no self-delimiting program of length $n + \K(n)-c$ outputs $x$ in at most $BB(n)$ steps. 
  $x$ exist because for large $d$ there are at most $O(2^{n-d})$ 
  strings of length $n$ with complexity $n + \K(n)-d$ 
  (see~\cite[Theorem 3.7.6 p. 129]{Downey}, this also follows from the coding theorem).
  By construction $\C(x) \ge n - O(1)$ thus a trivial program of $x$ on a plain machine 
  is shortest within $O(1)$. Hence, the depth of $x$ is small on a plain machine. 
  Because $x$ can be computed from $B(n)$, the proof above guarantees that for infinitely many $n$ we have 
  $\K(x) \le \K(B(n)) + O(1) \le n + \K(n) - \logg n + O(1)$.  Fix such an $n$. To have
  $\depth_{K,\logg n - e}(x) < BB(n)$, we need a program for $x$ that computes $x$ in time less than 
  $BB(n)$ of length 
  $n + \K(n) - \logg n + O(1) + (\logg n - e) = n + \K(n) + O(1) - e$. For large $e$ this
  contradicts the choice of $x$, and hence the depth is at least $BB(n-O(1))$.
  \end{proof}

\begin{remark}\label{remark:openQuestion} 
  There exist infinitely many $x$ such that $\K(\K(x)|x,\C(x)) \ge \logg n - O(1)$. Indeed, 
  let $n$ be as in  Theorem~\ref{th:GacsTight}. Let $x$ of length $n$ have maximal prefix 
  (and hence plain) complexity such that $\K(\K(n)|x,n) \ge \K(\K(n)|n) - O(1)$. 
  This implies
  \[
    \K(\K(x)|x,\C(x)) = \K(n + \K(n)|x,n) = \K(\K(n)|x,n) \ge \K(\K(n)|n) \ge \logg n
    \]
    up to $O(1)$ terms.

    On the other hand $\K(\C(x)|x,\K(x))$ must be very small and it is an {\em open question} whether it is bounded by a
    constant. In particular this would imply that the inequality
    \[
      \K(x) \le \K(\C(x)) + \K\left(x|\C(x),\K(\C(x))\right)
    \]
    is an equality, which is also an  {\em open question}. 
\end{remark}

\section{Infinitely often $C$ and $K$ trivial sequences}      \label{sec:infOftenTrivial}

In the previous section we argued why a shortest self-delimiting program for a string
can contain more information than a shortest plain program. This suggest that the classes of 
infinitely often $C$ and $K$ trivial sequences might be different. The following theorem illustrates
this.

\begin{theorem}\label{th:trivialCvsK}
  There exists a sequence $\omega$ for which
  $\K(\omega_1\dots \omega_N) - \K(N) \le O(1)$ for infinitely many $N$, and for which
  $\C(\omega_1\dots \omega_N) - \C(N)$ tends to infinity.
\end{theorem}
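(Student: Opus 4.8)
The plan is to construct $\omega$ so that infinitely often its prefixes are $K$-trivial while $\C$-deficiency from $\C(N)$ grows without bound. The guiding intuition, developed in the preceding section, is that a shortest self-delimiting program for an $N$-bit prefix $x$ must encode the length $N$ in self-delimited form, which costs $\K(N)$ bits; so to make $x$ $K$-trivial (i.e.\ $\K(x) \le \K(N) + O(1)$) we want essentially no information beyond $N$ in $x$, meaning $\K(x \mid N, \K(N))$ should be $O(1)$. By symmetry of information this forces $\K(x) = \K(N) + O(1)$. The subtlety is that $K$-triviality of $x$ does \emph{not} pin down the plain complexity tightly: by Lemma~\ref{lem:relateDeficiencies}, plain complexity is governed by $\K(x \mid N)$ rather than $\K(x \mid N, \K(N))$, and these two differ by as much as $\K(\K(N) \mid N)$, which by Theorem~\ref{th:GacsTight} can be as large as $\logg N$.

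First I would choose a rapidly growing sequence of lengths $N_1 < N_2 < \cdots$, each $N_i$ satisfying the conclusions of Theorem~\ref{th:GacsTight} and Lemma~\ref{lem:GacsTight}, so that $\K(\K(N_i) \mid N_i) = \logg N_i + O(1)$ and this quantity tends to infinity. Then I would build $\omega$ by specifying, for each $i$, the block of bits between positions $N_{i-1}$ and $N_i$, arranging that the prefix $x^{(i)} = \omega_1 \cdots \omega_{N_i}$ carries no information beyond $N_i$ itself. Concretely, I want $\K(x^{(i)} \mid N_i, \K(N_i)) = O(1)$, which I can try to secure by letting each block be computable from the previous prefix and $N_i$ (for instance by taking the blocks to be a fixed computable string, e.g.\ all zeros, so that $x^{(i)}$ is determined by $N_i$ alone). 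With $\omega = 0^\infty$-type padding this gives $\K(x^{(i)} \mid N_i) = O(1)$ as well, which is \emph{too} trivial and would keep plain complexity low too — so the real design problem is to make $\K(x^{(i)} \mid N_i)$ large (of order $\logg N_i$) while keeping $\K(x^{(i)} \mid N_i, \K(N_i)) = O(1)$.

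The key idea, then, is to encode $\K(N_i)$ (or enough of it) into the block, so that the prefix becomes recoverable from $N_i$ \emph{together with} $\K(N_i)$ but \emph{not} from $N_i$ alone. The natural choice mirrors the previous proof: let the relevant block encode $\K(N_i)$, or take $x^{(i)}$ to be something like $B(N_i)$ restricted/padded appropriately, so that from $N_i$ and $x^{(i)}$ one recovers $\K(N_i)$, giving $\K(x^{(i)} \mid N_i) \ge \K(\K(N_i) \mid N_i) - O(1) = \logg N_i - O(1)$ via Lemma~\ref{lem:GacsTight}. By Lemma~\ref{lem:relateDeficiencies} this makes $N_i - \C(x^{(i)})$ of order $\logg N_i$, whereas $\C(N_i)$ is of order $\log N_i$; I then need to verify that $\C(x^{(i)}) - \C(N_i) \to \infty$, which should follow because $\C(x^{(i)})$ is pinned near $N_i - \logg N_i$ while $\C(N_i) \le \log N_i + O(1)$, and $N_i - \logg N_i - \log N_i \to \infty$. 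Simultaneously, since $x^{(i)}$ is computable from $N_i$ and $\K(N_i)$, symmetry of information gives $\K(x^{(i)}) = \K(N_i) + O(1)$, i.e.\ $K$-triviality along the subsequence $N_i$.

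The main obstacle I anticipate is the \emph{consistency of the construction as a single sequence}: the block inserted to realize length $N_i$ must be compatible with all earlier prefixes, and inserting information about $\K(N_i)$ into the $i$-th block must not accidentally inflate $\K(x^{(i)} \mid N_i, \K(N_i))$ beyond $O(1)$ nor disturb the $K$-triviality at \emph{later} marked lengths $N_{i+1}, N_{i+2}, \ldots$. I would handle this by taking the $N_i$ to grow fast enough that the earlier blocks are negligible (their total length is $o(\text{gap})$), and by choosing the encoded information to be exactly a self-delimited description of $\K(N_i)$ followed by a computable filler, so that conditioning on $N_i$ and $\K(N_i)$ makes the whole prefix computable. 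A secondary technical point is to confirm that the $\liminf$ of $\C(x^{(i)}) - \C(N_i)$ genuinely diverges rather than merely being unbounded — but because the lower bound $\logg N_i - O(1)$ on the plain deficiency holds for \emph{every} marked index and these deficiencies grow, and because between marked lengths the complexity estimates only improve the separation, the divergence along all of $\omega$ should follow.
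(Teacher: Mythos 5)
Your high-level intuition is the right one (the plain/prefix gap is driven by $\K(\K(N)\,|\,N)$, and $K$-triviality amounts to $\K(x\,|\,N,\K(N)) = O(1)$), and the $K$-side skeleton is sound: if the prefix $x^{(i)}$ is computable from $(N_i,\K(N_i))$, then indeed $\K(x^{(i)}) = \K(N_i)+O(1)$, since $\K(N_i,\K(N_i)) = \K(N_i)+O(1)$. But the $C$-side of your argument is genuinely broken. You claim simultaneously that $\K(x^{(i)})=\K(N_i)+O(1)$ and that ``$\C(x^{(i)})$ is pinned near $N_i - \logg N_i$''; these are incompatible, because $\C(x)\le \K(x)+O(1)$ always, so $K$-triviality forces $\C(x^{(i)}) \le \K(N_i)+O(1) \le \log N_i + O(\logg N_i)$, exponentially smaller than $N_i-\logg N_i$. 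The error comes from misreading Lemma~\ref{lem:relateDeficiencies}: from $\K(x^{(i)}\,|\,N_i)\ge \logg N_i - O(1)$ the lemma gives $|N_i - \C(x^{(i)})| = \Theta\left(|N_i - \K(x^{(i)}\,|\,N_i)|\right) = \Theta(N_i)$, which carries no useful information --- the lemma is informative only for strings whose complexity is close to the parameter $j$; it does \emph{not} give $N_i - \C(x^{(i)}) = \Theta(\logg N_i)$. What your construction actually yields is only $\C(x^{(i)}) \ge \C(N_i,\K(N_i)) - O(1)$, and you never prove $\C(N_i,\K(N_i)) - \C(N_i)\to\infty$; this is delicate, since plain symmetry of information carries error terms of the same order $\logg N_i$ as the gap you are chasing.

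There is a second, independent failure at the unmarked lengths, and it is where the paper's key idea lives. The theorem needs $\C(\omega_1\dots\omega_N) - \C(N)\to\infty$ for \emph{all} $N$, and planted data of only about $\logg N_i$ bits (namely $\K(N_i)$) cannot survive the gaps: since you take the $N_i$ to grow fast, there are lengths $N\in(N_i,N_{i+1})$ whose own binary expansion encodes $(N_j,\K(N_j))_{j\le i}$ self-delimitingly followed by padding, and for such $N$ the prefix $\omega_1\dots\omega_N$ is computable from $N$ plus $O(1)$ bits, so the plain deficiency collapses to $O(1)$. Your remark that between marked lengths ``the complexity estimates only improve the separation'' is exactly backwards. (Relatedly, your $K$-triviality at $N_i$ also breaks: the earlier blocks $\K(N_j)$, $j<i$, are not recoverable from $(N_i,\K(N_i))$.) The paper fixes all of this with two ideas absent from your proposal. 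First, it plants busy-beaver strings $w_n = B(n+\logg n)$ just before the positions $2^n$: their plain complexity $n+\logg n$ exceeds $\C(N)\le \log N + O(1) = n + O(1)$ for \emph{every} $N$ in the dyadic block, so no length can absorb them and the deficiency is at least $\logg n = \loggg N - O(1)$ everywhere; moreover $w_n$ computes every earlier $w_m$, so later blocks subsume earlier ones. Second, the $K$-trivial lengths are not the G\'acs numbers themselves but are manufactured from programs: with $n$ satisfying $\K(\K(n)\,|\,n) = \logg n + O(1)$ (Theorem~\ref{th:GacsTight}), one shows $\K(w_n\,|\,n,\K(n)) = n + O(1)$ and takes $N$ to be ``$10$'' followed by the bits of a shortest program computing $w_n$ from $(n,\K(n))$. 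Then $N$ itself carries exactly the information of the prefix beyond $(n,\K(n))$, while $(n,\K(n))$ is recoverable from any shortest prefix program for $N$ by symmetry of information, giving $\K(\omega_1\dots\omega_N)\le\K(N)+O(1)$. This ``length equals program'' trick is the missing ingredient that lets a prefix of high plain complexity be $K$-trivial.
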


\begin{figure}
 \begin{tikzpicture}
   \draw[thick,gray] node[anchor=east]  {\dots} (0,0)
      --  (7,0) node[anchor=west] {\dots};
      \draw[very thick,gray] (2,0) -- node[anchor=north,black] {$2^{n-1}$} ++(0,-0.1) ;
      \draw[very thick,gray] (4,0) -- node[anchor=north,black] {$2^{n}$} ++(0,-0.1) ;
      \draw[very thick,gray] (6,0) -- node[anchor=north,black] {$2^{n+1}$} ++(0,-0.1) ;

      \draw[ultra thick] (1.1,0) -- node[anchor=south] {$1w_{n-1}$} (1.95,0);
      \draw[ultra thick] (3.1,0) -- node[anchor=south] {$1w_{n}$} (3.95,0);
      \draw[ultra thick] (5.1,0) -- node[anchor=south] {$1w_{n+1}$} (5.95,0);
 \end{tikzpicture}
 \caption{Construction of $\omega$ in the proof of  Theorem~\ref{th:trivialCvsK}.}
 \label{fig:trivialCvsK}
\end{figure}
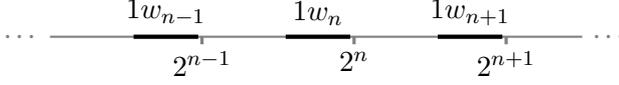

\begin{proof}
  Recall that $B(n)$ is a program of length at most $n$ with maximal running time on a plain
  machine.
 $\omega$ consists of zeros, except at small neighborhoods before indexes $2^n$ for all large $n$, 
 and in these neighborhoods strings $w_n = B(n + \logg n)$ are placed,
 %(where the constant $c$ will be determined later), 
 see Figure~\ref{fig:trivialCvsK}; 
 more precisely $\omega_{2^n - |w_n|} \dots \omega_{2^n} = 1w_n$ (the prepended one in
 $1w_n$ allows us to identify the beginning of $w_n$).

 We show that $\C(\omega_1\dots \omega_N) - \C(N) \ge \loggg N - O(1)$ for all $N$, which 
 obviously tends to infinity.
 Fix any $N$ and let $n$ be such that $2^n \le N < 2^{n+1}$. 
 The initial segment $\omega_1\dots \omega_N$ computes $w_n$,
 %Also, $|w_n| = n + \logg n + O(1)$ computes $n$ with $O(1)$ of information, 
 thus $\C(\omega_1\dots \omega_N) \ge \C(w_n) \ge n + \logg n$ (here and below
 we omit terms $O(1)$).  On the other hand we have $\C(N) \le \log N = n$, hence 
 \[
   \C(\omega_1\dots \omega_N) - \C(N) \ge (n + \logg n) - n = \logg n = \loggg N\,.
 \]
 
 \smallskip
 It remains to construct $c$ and infinitely many $N$ such that $\K(\omega_1\dots \omega_N)\le \K(N)+c$. 
 The idea is to choose for infinitely many $n$ some $N$ such that
 $2^n \le N < 2^{n+1}-|w_{n+1}|$ and such that some shortest program for $N$ can compute $w_n$ with
 $O(1)$ of information; thus it can also compute $w_1, w_2, \dots, w_{n-1}$ and
 $\omega_1\dots\omega_N$ with $O(1)$ bits of information.

 As one might guess, we choose $n$ such that $\K(\K(n)|n) = \logg n$.
 Let us compute $\K(w_n| n,\K(n))$ in a similar way as before. We drop $O(1)$ terms: 
 \begin{eqnarray*}
   n + \logg n & = & \C(w_n) = \K(w_n|n) = \K(\K(n),w_n|n) \\
   &=& \K(\K(n)|n) + \K(w_n|\K(n),\K(\K(n)|n),n) \\
   &=& \logg n + \K(w_n|\K(n),n) \,. 
 \end{eqnarray*}
 Thus $\K(w_n| n,\K(n)) = n$. 
 
 Let $N$ in binary be the first $n-2$ bits of a program witnessing this equation
 (i.e. a program of length at most $n+O(1)$ computing $w_n$ from $n$ and $\K(n)$) prepended with
 the string ``$10$''. Prepending ``$10$'' guarantees that $2^n \le N < 2^{n+1} - |w_{n+1}|$ for large $n$. 
 By construction, if $n$ and $\K(n)$ are given, $N$ can compute $w_n$  with $O(1)$ bits of
 information. Thus it also computes $w_1, \dots, w_{n-1}$ and
 $\omega_1\dots\omega_N$. 
 On the other hand, every shortest program for $N$ can also compute $n$ and $\K(n)$
 with $O(1)$ bits of information. Indeed, 
 \[
 \K(N) = \K(N,n) = \K(n) + \K(N|n,\K(n))\,;
 \]
 thus on a universal prefix machine, there exist a $O(1)$-shortest program for $N$ 
 that is the concatenation of two self-delimiting programs 
 and the length of the first is $\K(n)$. 
 Together:
 \[
   \K(N) = \K(n,\K(n),N) = \K(w_1,\dots,w_n,n,\K(n),N)
   \ge \K(\omega_1\dots \omega_N)\,. \qedhere
   \] 
\end{proof}

\section{Contrasting plain and prefix complexity in $3$-random sequences}        \label{sec:contrastingIn3Random}

\begin{theorem}\label{th:unboundedK_boundedC}
  For every $3$-random sequence $\omega$ there are a $c$ and infinitely many $j$ such that
  $j - \C(\omega_1\dots \omega_j) \le c$ 
  and $\K(j) + j - \K(\omega_1\dots \omega_j) \ge \logg j -c$.
\end{theorem}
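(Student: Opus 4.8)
The plan is to turn both deficiencies into conditional complexities, isolate a single \emph{encoding} condition on the initial segment, and then force that condition at infinitely many lengths for every $3$-random sequence by a measure argument that is effective relative to $\mathbf{0}''$.

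First I would put $x=\omega_1\dots\omega_j$ and rewrite the two sides. By Lemma~\ref{lem:relateDeficiencies} the plain deficiency satisfies $j-\C(x)=\Theta(j-\K(x\mid j))$, so $j-\C(x)\le c$ means $\K(x\mid j)=j-O(1)$. Since $j=|x|$ is computable from $x$, symmetry of information gives $\K(x)=\K(j)+\K(x\mid j,\K(j))$, whence the prefix deficiency equals $j-\K(x\mid j,\K(j))+O(1)$. Applying symmetry of information conditional on $j$ to the pair $x,\K(j)$, and using that at a plain-maximal position $\K(x\mid j)=j-O(1)$ is pinned down by $j$, yields
\[
 \K(x\mid j)-\K(x\mid j,\K(j)) = \K(\K(j)\mid j)-\K(\K(j)\mid x,j)+O(\loggg j).
\]
Restricting $j$ to the G\'acs lengths of Theorem~\ref{th:GacsTight}, where $\K(\K(j)\mid j)=\logg j+O(1)$ by Lemma~\ref{lem:GacsTight}, the whole statement reduces to: \emph{for every $3$-random $\omega$ there are infinitely many G\'acs lengths $j$ with $\K(x\mid j)=j-O(1)$ and $\K(\K(j)\mid x,j)=O(1)$}, i.e.\ the initial segment together with its length computes $\K(j)$. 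This is precisely the mechanism behind $x=B(n)$ in Theorem~\ref{th:SolovayII}, where the longest-running program computes $\K(n)$ from $n$.

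Next I would estimate, at a fixed G\'acs length $j$, the measure of the good sequences. The natural witnesses are the strings whose last $\logg j$ bits spell $\K(j)$ in binary (recall $\log\K(j)=\logg j+O(1)$) and whose first $j-\logg j$ bits are arbitrary: given $\K(j)$ the trailing block is free, so $\K(x\mid j,\K(j))\le j-\logg j+O(1)$, while the $O(1)$ rule ``read the last $\logg j$ bits'' recovers $\K(j)$ from $x$ and $j$; and since $\K(\K(j)\mid j)=\logg j+O(1)$ that block is incompressible given $j$, so a $1-2^{-O(1)}$ fraction of the prefixes keep $\C(x)=j-O(1)$. This gives a good-set density of order $2^{-\logg j}$, and because the block sits at the moving positions $(j-\logg j,j]$ the events at different lengths use disjoint bits and are independent. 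As the predicate ``$j$ is a good length for $\omega$'' only refers to $\K$ and $\K(\K(\cdot)\mid\cdot)$ and to locating the G\'acs lengths, it is decidable relative to $\mathbf{0}''$, which is why the hypothesis of $3$-randomness ($=\mathbf{0}''$-randomness) is the natural one here. I would then package the sequences with only finitely many good $j$ as a $\mathbf{0}''$-Martin-L\"of test and finish by an effective Borel--Cantelli argument, \emph{provided} the success probabilities sum to infinity.

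The hard part will be this divergence. With the G\'acs lengths as literally stated ($\lfloor\log j\rfloor=2^{l}$, one per scale) the densities $2^{-\logg j}=2^{-l}$ sum to a constant, so a single qualifying length per scale forces only finitely many hits; and no per-length boosting can help, since the counting bound $\K(x\mid j,\K(j))\le j-\logg j+O(1)$ already caps the density at $O(2^{-\logg j})$. The divergence must therefore come from an \emph{abundance} of qualifying lengths, and I expect the crux to be a density strengthening of Theorem~\ref{th:GacsTight}: that for infinitely many $l$ there are at least $2^{l}$ lengths $j$ with $\lfloor\log j\rfloor=2^{l}$ and $\K(\K(j)\mid j)\ge l-O(1)$. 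Given this, each such scale contains $2^{l}$ independent (disjoint-block) candidates that each encode $\K(j)$ with probability $2^{-l}$, so the scale succeeds with probability $1-(1-2^{-l})^{2^{l}}=\Theta(1)$; distinct scales are independent, the scale-wise sum diverges, and Borel--Cantelli yields infinitely many good lengths for almost every --- and, after effectivisation at $\mathbf{0}''$, every $3$-random --- sequence. Establishing this abundance of high complexity-of-complexity lengths, while simultaneously maintaining plain-maximality and the $\mathbf{0}''$-uniformity of the test, is where I expect the real work to lie.
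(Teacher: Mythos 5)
Your reduction in the first paragraph is sound and matches the mechanism the paper itself uses (prefix deficiency at a plain-maximal, G\'acs-type length $j$ is $\logg j - \K(\K(j)|x,j) + $ small error, so one needs $x$ to compute $\K(j)$ from $j$). But from there your route diverges from the paper's, and it has a gap that you yourself flag as ``where the real work will lie'': the entire argument rests on a density strengthening of Theorem~\ref{th:GacsTight} (at least $2^l$ lengths per scale with $\K(\K(j)|j) \ge l - O(1)$). This is proved nowhere --- not in the paper, not in the cited literature --- and it is not a routine extension: G\'acs's lower bound is a delicate game/priority construction that produces one witness per scale, and a counting version would require reworking it. As you correctly observe, without it the per-scale success probabilities $2^{-l}$ are summable and the second Borel--Cantelli lemma yields nothing. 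So the proposal is a reduction of the theorem to an open combinatorial conjecture, not a proof. The paper avoids the hitting problem entirely by reversing the direction of the encoding: instead of waiting for $\omega$ to contain $\K(j)$, it builds the length from $\omega$ itself, taking $j = \la i,q \ra$ where $q = XOR(\omega_1\dots\omega_{\log \K(i)}, \la \K(i)\ra)$ and $i$ is a G\'acs number chosen conditionally to a length $n$ at which $\CH(\omega_1\dots\omega_n) \ge n - c$ (the relativized Miller/Nies--Stephan--Terwijn characterization of $3$-randomness). Then $\K(j)$ acts as a decryption key, making the first $\logg j$ bits of $\omega$ free given $(j,\K(j))$, while Lemma~\ref{lem:XORencryption} shows that $\omega_1\dots\omega_n$ stays incompressible given $(i,q,n)$, so plain maximality at $j$ survives via Lemmas~\ref{lem:deficiencyInitialSegment} and~\ref{lem:relateDeficiencies}. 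This works for every $3$-random $\omega$ with no measure-theoretic hitting argument, which is exactly what a ``for every'' statement needs.

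There is a second, independent flaw: your independence claim is wrong as stated. Plain maximality at length $j$ is a property of all $j$ bits, not of the block at positions $(j-\logg j, j]$, so the per-scale events are not functions of disjoint bits. Worse, they can interfere information-theoretically. If $\omega$ hits your pattern at scales $l_1 < \dots < l_m$, its prefix at scale $l_m$ contains the strings $\la\K(j_{l_1})\ra,\dots,\la\K(j_{l_m})\ra$, and nothing prevents these target values from sharing information --- all the ``extra'' information in $\K(j)$ beyond $j$ is computable from $\0$, and any mutual information among the targets is a direct loss in $\C(\omega_1\dots\omega_{j_{l_m}})$. In the extreme (nested) case the plain deficiency at the $m$-th hit is about $\sum_{i<m} \logg j_{l_i}$, which is unbounded; that is, the very event you force infinitely often can destroy the plain-maximality conjunct at all but finitely many of the hit lengths, whereas the theorem needs both conditions at the same $j$ with a single constant $c$. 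Rescuing the approach would require yet another unproven strengthening of Theorem~\ref{th:GacsTight}: witnesses whose $\K$-values are mutually independent across scales. The paper's construction has no such cross-scale interaction, since each chosen $n$ produces its own self-contained $j$.
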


We conjecture that the result holds for all $2$-random sequences. It is possible to present the 
proof in game structure, but both the game and the strategy are quite complicated. We give
a proof that has the same core structure as the other proofs above.
In the proof we use two lemmas. The first roughly states that randomness deficiency of a string is bounded by 
the deficiency of an initial segment.
\begin{lemma}\label{lem:deficiencyInitialSegment}
  Let $j = |x|$ and $n = |xy|$
 \begin{eqnarray*}
  j - \K(x|j)  \le  n - \K(xy|j,n) + O(1) \\
 \end{eqnarray*}
\end{lemma}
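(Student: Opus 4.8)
The plan is to rewrite the claimed inequality in the equivalent form
\[
\K(xy \mid j, n) \;\le\; \K(x \mid j) + (n - j) + O(1),
\]
which follows from the statement of the lemma by elementary rearrangement (add $\K(xy\mid j,n)$ to both sides and subtract $j$). So it suffices to describe $xy$, given the two lengths $j$ and $n$ as conditions, using only $\K(x \mid j) + (n-j)$ bits up to an additive constant.

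The idea is that once $j$ and $n$ are available as conditions, the only information about $xy$ that is not already determined is a shortest description of the prefix $x$ relative to its length $j$, together with a verbatim copy of the suffix $y$, whose length $n - j$ is fixed by the conditions. I would build a prefix machine $V$ that, on condition $(j,n)$ and input-tape content $q$, first simulates the universal prefix machine $U$ with condition $j$ on $q$; since programs for $U$ are self-delimiting, $U$ halts after scanning exactly some initial segment $p$ of $q$, producing a string that $V$ interprets as $x$. The machine $V$ then reads exactly $n - j$ further bits of $q$, names them $y$, and outputs the concatenation $xy$.

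Feeding $V$ the input $p\,y$, where $p$ is a shortest $U$-program for $x$ given $j$ (so $|p| = \K(x \mid j)$) and $y$ is the actual suffix, makes $V$ output $xy$ from an input of length $|p| + (n-j) = \K(x \mid j) + (n-j)$. By the invariance theorem the complexity on the universal machine exceeds that on $V$ by at most a constant, and this yields the displayed inequality and hence the lemma.

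The one point that needs care is verifying that $V$ really is a prefix machine, i.e.\ that its halting inputs form a prefix-free set. This holds because $U$, reading self-delimiting programs, locates the end of $p$ from the bits of $p$ alone, after which $V$ consumes a fixed, condition-determined number $n - j$ of additional bits; hence no halting input of $V$ is a proper prefix of another. I expect this to be the only subtlety: everything else is a direct encoding argument that uses neither symmetry of information nor any of the heavier machinery of the preceding sections.
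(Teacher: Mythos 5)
Your proof is correct and is in substance the same as the paper's: the paper performs the identical rearrangement to $\K(xy|j,n) \le \K(x|j) + (n-j) + O(1)$ and obtains it from the chain $\K(xy|j,n) = \K(x,y|j,n) \le \K(x|j,n) + \K(y|x,j,n) \le \K(x|j) + (n-j)$, using that $|y| = n-j$ is computable from the condition. Your explicit prefix machine $V$ is exactly the standard construction underlying those two inequalities (self-delimiting program for $x$ given $j$, followed by a verbatim copy of $y$ whose length the condition determines), so the two arguments differ only in that you unpack the machinery that the paper cites as known.
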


\begin{proof}
  We omit $O(1)$ terms. Observe that $\K(xy|j,n) = \K(x,y|j,n)$, and this is bounded by
 \[
  \le  \K(x|j,n) + \K(y|x,j,n) \le  \K(x|j) + n-j \,, 
  \]
 because $\K(y | \,|y|) \le |y|$ for all strings $y$ and $|y|=n-j$ is computable from the condition.
 The inequality of the lemma follows after rearranging.
\end{proof}

Let $a$ and $b$ be two strings of the same length. Let $XOR(a,b)$ denote the bitwise XOR
operator on these strings. The following lemma states that if $a$ is incompressible, 
and $b$ is incompressible given $a$, then also $b$ is incompressible relative to $XOR(a,b)$. 
In fact, we will use a generalization which states
that if an extension $bw$ is incompressible given $a$, then this extension is incompressible 
given $XOR(a,b)$.

\begin{lemma}\label{lem:XORencryption}
  Let $a$ and $b$ be strings of equal length $\ell$, let $w$ be any string, let $n = |bw|$, and let $i$ be
  any number.  If 
  \[
    K(a|\ell,n,i) \ge \ell -c \text{\;\;\;   and \;\;\;  } K(bw|a,n,i) \ge n - c\,, 
  \]
  then
  \[
   \K(bw|XOR(a,b),n,i) \ge n - O(c)\,.
  \]
\end{lemma}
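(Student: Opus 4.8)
The plan is to show that the two hypotheses force the pair $(a,bw)$ to be jointly incompressible, that replacing $a$ by $XOR(a,b)$ preserves this (the substitution is an $O(1)$-computable bijection on pairs), and that a string of length $\ell$ is too short to compress $bw$ below $n-O(c)$. Throughout I fix the context $z=(\ell,n,i)$, suppress it from the notation, and drop additive $O(1)$ terms, writing $\K(\cdot)$ for $\K(\cdot|z)$.

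First I would establish the joint lower bound $\K(a,bw)\ge \ell+n-O(c)$. By symmetry of information $\K(a,bw)=\K(a)+\K(bw|a,\K(a))$. The first term is $\ge \ell-c$ by the first hypothesis. For the second term I would argue that conditioning on $\K(a)$ in addition to $a$ costs little: the first hypothesis pins $\K(a)$ into the interval $[\ell-c,\ell+O(1)]$, so $\K(a)$ is determined by $O(\log c)$ bits given $z$, whence $\K(bw|a,\K(a))\ge \K(bw|a)-O(\log c)\ge n-c-O(\log c)$ using the second hypothesis (here $\ell=|a|$ is recoverable from $a$). Adding the two estimates gives $\K(a,bw)\ge \ell+n-O(c)$. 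This is the heart of the argument and the step I expect to be the main obstacle: the symmetry-of-information decomposition a priori produces a ``complexity-of-complexity'' term that looks like it could cost $\log\ell$, and the point is that $c$-incompressibility of $a$ confines $\K(a)$ to precision $O(c)$, reducing this cost to $O(\log c)$, which is absorbed into $O(c)$.

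Next I would transfer this bound to the encrypted pair. Writing $e=XOR(a,b)$ and noting that $b$ is the length-$\ell$ prefix of $bw$ while $\ell$ lies in the context, the maps $(a,bw)\mapsto(e,bw)$ and $(e,bw)\mapsto(a,bw)$ are both $O(1)$-computable (each recovers $b$ from $bw$ and XORs), so $\K(e,bw)=\K(a,bw)\ge \ell+n-O(c)$.

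Finally I would conclude directly, without any contradiction. By subadditivity $\K(e,bw)\le \K(e)+\K(bw|e)+O(1)$, and since $e$ has length $\ell$ with $\ell$ in the context, $\K(e)\le \ell+O(1)$. Rearranging and inserting the lower bound yields $\K(bw|e)\ge \K(e,bw)-\K(e)-O(1)\ge(\ell+n-O(c))-\ell-O(1)=n-O(c)$. As $\ell=|e|$ is recoverable from $e$, this is exactly $\K(bw|XOR(a,b),n,i)\ge n-O(c)$, as claimed.
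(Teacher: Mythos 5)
Your proof is correct and takes essentially the same route as the paper's: symmetry of information yields the joint bound $\K(a,bw|\ell,n,i)\ge \ell+n-O(c)$ (with the $\K(a)$-in-the-condition term handled exactly as you describe, since $c$-incompressibility pins $\K(a|\ell,n,i)$ to within $O(c)$ of $\ell$), the computable substitution $a\leftrightarrow XOR(a,b)$ transfers this bound to the encrypted pair, and subadditivity splits off $\K(XOR(a,b)|\ell,n,i)\le\ell+O(1)$ to conclude. The only difference is presentational: the paper first works out the case of empty $w$ and then repeats the argument for general $w$, whereas you do the general case directly.
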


\begin{proof}%[Proof of Lemma~\ref{lem:XORencryption}]
  In the lemma all complexities are conditional to $i$. 
  The proof of the conditional form follows the unconditional one, presented here.
  %Also note that it suffices to show that $\C(b,w| XOR(\dots)) \ge |bw|-O(c)$ because from 
  %$XOR(\dots)$ we know the length of $b$ and we can use it to split $bw$; similar for the 
  %term $\C(bw|a) = \C(b,w|a)$. 
  We first consider the case where $w$ is the empty string, 
  the proof for non-empty $w$ follows the same
  structure and will be presented afterwards. We need to show that for all $c,\ell,a,b$ such that 
  $|a|=|b|=\ell$, $\K(a|\ell) \ge \ell-c$ and $\K(b|a) \ge \ell -c$ we have
  \begin{equation}\label{eq:lemXorGoalSimple}
    \K(b|XOR(a,b)) \ge \ell + O(c) \,.
  \end{equation}
  
  \smallskip
  Indeed, 
  \[
   \K(a,b|\ell) = \K(a|\ell) + \K(b|a,\ell,\K(a|\ell)) + O(1)\,. 
  \]
  By assumption $\K(a|\ell) \ge \ell-c$, thus $\K(a|\ell) = \ell + O(c)$ and the last term simplifies to
  $\K(b|a,\ell) + O(c)$ and this equals $\ell + O(c)$. Hence $\K(a,b|\ell) = 2\ell + O(c)$.
  Let $xor = XOR(a,b)$. Because $a = XOR(b,xor)$ we have up to additive terms $O(c)$:
  \[
   2\ell  = \K(a,b|\ell) \le \K(xor, b|\ell) \le \K(xor|\ell) + \K(b|xor,\ell) \le \ell + \K(b|xor)\,,
  \]
  and this implies \eqref{eq:lemXorGoalSimple}.

  \smallskip
  We modify the equations above for the case where $w$ is not empty. Let $n = |bw|$ and remind that
  $|a|=\ell$. We start with 
  \[
   \K(a,b,w|\ell,n) = \K(a|\ell,n) + \K(b,w|a,\K(a|\ell,n),n) \ge \ell + n - O(c)\,. 
  \]
  Note that because $\ell=|b|$ we have $\K(bw,\dots|\ell,\dots) = \K(b,w,\dots|\ell,\dots)$.
   The left-hand also equals
  \[
   \K(xor, b,w|\ell,n) \le \K(xor|\ell,n) + \K(b,w|xor,\ell,n) \le \ell + \K(b,w|xor,n)\,,
  \]
  hence $\K(b,w|xor,n) \ge n - O(c)$.
\end{proof}

\begin{proof}[Proof of  Theorem~\ref{th:unboundedK_boundedC}]
  Let $\omega$ be $3$-random. 
  By  Lemma~\ref{lem:relateDeficiencies}, it suffices to construct infinitely many~$j$ such that 
 \begin{equation}\label{eq:condC}
  \K(\omega_1\dots \omega_j|j) \ge j - O(1)
 \end{equation}
 and $\K(\omega_1\dots \omega_j|j,\K(j)) \le j - \logg j + O(1)$. (Indeed, the last inequality implies
 $\K(\dots) \le j + \K(j) - \logg j + O(1)$ for the same reasons as in the proof of  Theorem~\ref{th:SolovayII}.)
 The second inequality follows from
 \begin{equation}\label{eq:condK}
   \K(\omega_1\dots \omega_{\logg j}|j,\K(j)) \le O(1)\,.
 \end{equation}
 %Hence we need to construct $j$ such that  \eqref{eq:condC} and  \eqref{eq:condK} are satisfied. 

 \medskip
 \textbf{Sketch of the proof.}
 As usual, we construct $j$ such that $\K(\K(j)|j) \ge \logg j - O(1)$. For technical reasons, 
 we start with an index $i$ that will have almost the same information as $j$ and that
 satisfies  $\K(\K(i)|i) \ge \logg i - O(1)$.
 We also show that $i$ can be chosen such that $i$ and $K(i)$ are independent from $\omega_1\dots\omega_n$ 
 for an initial segment with maximal plain complexity (for this we need that $\omega$ is $3$-random).
 The main idea is to use $K(i)$ to encrypt the first $\log \K(i)$ bits of~$\omega$ (using the bitwise XOR operator). 
 Let $q$ be this encryption. We show (using Lemma~\ref{lem:XORencryption}) that 
 \[
   K(\omega_1\dots \omega_n|i,q,n) \ge n-O(1)\,.
   \]
 %and this implies \eqref{eq:condC} by  Lemma~\ref{lem:relateDeficiencies}.
 But with our encryption key $\K(i)$, we can decrypt the initial segment of $\omega$, thus 
 \[
 \K(\omega_1\dots \omega_{\log \K(i)}|i,q,\K(i)) \le O(1)\,.
 \]
 Finally, we define $j \le n$ by applying a bijective computable function of $i$ and~$q$. 
 Thus the pair $(q,i)$ contains the same information as $j$, i.e.
 $\K(\omega_1\dots \omega_n|i,q,n) = \K(\omega_1\dots \omega_n|j,n) + O(1)$. 
 Thus $\K(\omega_1\dots \omega_j|j,n) \ge j - O(1)$ by Lemma~\ref{lem:deficiencyInitialSegment}. 
 On the other hand, the construction implies that $\logg j = \log \K(i) + O(1)$ and that
 $\K(i)$ and $\K(j)$ carry the same information. Hence
 \begin{equation*}%\label{eq:condK2}
  \K(\omega_1\dots \omega_{\logg j}|j,\K(j)) =  
  \K(\omega_1\dots \omega_{\log \K(i)}|i,q,\K(i)) + O(1)\le O(1)\,,
 \end{equation*}
 and this finishes the proof.

 %\medskip
 %\textbf{Construction of $i$ and $n$.}

 \medskip
 \textbf{Requirements for $n,i$ and $q$.}
 We choose infinitely many triples $(n,i,q)$ and start with formulating five requirements 
 from which equations \eqref{eq:condC} and \eqref{eq:condK} follow. 
 Let $\la \cdot,\cdot \ra$ be a computable bijective pairing function from numbers and strings to
 numbers. For later use we assume that $\log \la k,x \ra = \log k + O(|x|)$ for all $k$ and $x$. 

 Equation \eqref{eq:condC} with $j = \la i,q \ra$, follows from
 \begin{itemize}
   \item[$(a)$] $\K(\omega_1\dots \omega_n|i,q,n) \ge n - O(1)$,
   \item[$(b)$] $\la i,q\ra \le n$ for large $n$, 
 \end{itemize}
  and Lemma~\ref{lem:deficiencyInitialSegment}. 
 Equation~\eqref{eq:condK} follows from:
 \begin{itemize}
   \item[$(A)$] $\K(\omega_1\dots \omega_{\log \K(i)} | \K(i),q) \le O(1)$,
   \item[$(B)$] $\log \K(i) = \logg \la i,q \ra + O(1)$,
   \item[$(C)$] $\K(i,q) = \K(i) + \log \K(i) + O(1)$.
 \end{itemize}
 Indeed, for all $z$, $(C)$ implies 
 $\K(z|i,q,\K(i)) = \K(z|i,q,\K(j)) + O(1)$.

 \medskip
 \textbf{Construction of $n$ and $i$.} 
 We use the characterization of $2$-random sequences with plain complexity:
 \begin{theorem*}[Joseph Miller~\cite{Miller2randC}, Nies--Stephan--Terwijn~\cite{Nies2rand}]
  A sequence $\omega$ is Martin-L\"of random relative to the Halting problem if and only if there exist a
  $c$ and infinitely many $n$ such that $\C(\omega_1\dots\omega_n) \ge n - c$.
 \end{theorem*}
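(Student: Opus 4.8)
The plan is to prove both implications through the relativized Levin--Schnorr theorem, which reduces the randomness notion to an incompressibility condition that I can compare with the plain-complexity condition. Concretely, $\omega$ is Martin-L\"of random relative to $\0$ if and only if $\KH(\omega_1\dots\omega_n)\ge n-O(1)$ for \emph{all} $n$, where $\KH$ denotes prefix complexity relativized to the halting problem. So the whole theorem becomes the assertion that this ``for all $n$'' lower bound on $\KH$ of the prefixes is equivalent to the ``for infinitely many $n$'' lower bound on $\C$ of the prefixes. The bridge between the two complexities is that $\0$ computes $\C$ exactly (it is upper semicomputable, hence limit computable). First I would record the two consequences of this: for every length the set of strings of plain complexity below a threshold is a \emph{$\0$-computable} finite set of controlled size, which yields $\KH(x)\le \C(x)+\KH(\C(x))+O(1)=\C(x)+O(\log|x|)$; and, conversely, a $\0$-program can be turned into an unrelativized one by appending the finite portion of $\0$ that it queries, so that $\C(x)\le \KH(x)+(\text{de-relativization overhead})$.

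For the implication from infinitely often $\C$-maximality to $2$-randomness I would argue that plain incompressibility of a long prefix is inherited by the shorter ones. If $\C(\omega_1\dots\omega_n)\ge n-c$ and some shorter prefix $\omega_1\dots\omega_m$ (with $m\le n$) were strongly $\0$-compressible, then describing that prefix from its $\0$-program together with the literally written remaining block $\omega_{m+1}\dots\omega_n$ would compress the whole initial segment, contradicting its maximal plain complexity. Since the infinitely many witnessing lengths $n$ tend to infinity, every length $m$ lies below such an $n$, and one obtains $\KH(\omega_1\dots\omega_m)\ge m-O(1)$ for all $m$, i.e. $2$-randomness. The reverse implication is the measure-theoretic heart: given a sequence whose plain deficiency $n-\C(\omega_1\dots\omega_n)$ tends to infinity, I would construct a $\0$-Martin-L\"of test that captures it, using that $\0$ knows $\C$ \emph{exactly} to enumerate, at each level $d$, a prefix-free family of dip-witnesses.

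The hard part will be controlling measures and constants in this last construction. The naive candidate test $\{\zeta:\exists n\ \C(\zeta_1\dots\zeta_n)<n-d\}$ fails, because summing $\#\{x\in\{0,1\}^n:\C(x)<n-d\}\cdot 2^{-n}\le 2^{-d}$ over all lengths diverges; the same phenomenon appears on the complexity side as a de-relativization overhead that is only $O(\log n)$ rather than $O(1)$. Overcoming this is exactly where the exact $\0$-computability of $\C$ (as opposed to mere enumerability) is indispensable: it lets me replace the crude union by a Kraft/Shannon--Fano weighting that selects, for each $\omega$ with unbounded plain deficiency, a single prefix-free witness set of total measure at most $2^{-d}$, so that the levels form a genuine $\0$-test. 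Matching this construction against the universal $\0$-test then upgrades the $O(\log n)$ estimates of the bridge to the $O(1)$ equivalence claimed, and the two implications together give the stated characterization.
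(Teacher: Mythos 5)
The paper does not prove this statement: it is imported as a black box from Miller~\cite{Miller2randC} and Nies--Stephan--Terwijn~\cite{Nies2rand} (with simplified proofs in~\cite{2rand}), so your attempt has to stand on its own --- and it does not, because both implications rest on a false bridge. The inequality $\C(x)\le \KH(x)+(\text{overhead})$ fails for \emph{every} overhead bounded in terms of $|x|$: de-relativizing a $\KH$-program by ``appending the finite portion of $\0$ that it queries'' costs on the order of $\log u$ bits, where $u$ is the oracle use, and $u$ is not bounded by any function of $|x|$. Concretely, let $x$ be the lexicographically first string of length $n$ with $\C(x)\ge n$; since $\0$ computes $\C$, this $x$ is computable from $n$ with oracle $\0$, so $\KH(x)\le \K(n)+O(1)=O(\log n)$ while $\C(x)\ge n$. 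So your parenthetical claim that the overhead is ``only $O(\log n)$'' is wrong, and with it your first implication collapses: from $\KH(\omega_1\dots\omega_m)<m-d$ you cannot build a short \emph{plain} description of $\omega_1\dots\omega_n$ by running the $\0$-program and copying the literal block, because the plain machine has no access to $\0$. Downward inheritance of deficiency is correct for conditional prefix complexity (this is Lemma~\ref{lem:deficiencyInitialSegment}, which relativizes), but your hypothesis bounds only the unrelativized $\C$ of the long prefix, and the example above shows that stringwise $\C$-incompressibility gives \emph{no} lower bound on $\KH$. That gap is precisely the content of the Nies--Stephan--Terwijn direction: their argument is global along the sequence, converting a $\0$-ML test capturing $\omega$ into a plain-machine compression of \emph{all} sufficiently long prefixes, with the input length (which a plain machine knows) used to control the limit approximations to the $\Sigma^0_1(\0)$ test sets. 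None of that machinery appears in your sketch.

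The converse direction hides the entire difficulty in the phrase ``Kraft/Shannon--Fano weighting that selects \dots\ a single prefix-free witness set of total measure at most $2^{-d}$'', and that step is false as stated. By Martin-L\"of's complexity oscillations, \emph{every} sequence dips: if $n$ in binary reads $1\omega_1\dots\omega_k$, then the first $k=\log n+O(1)$ bits of $\omega_1\dots\omega_n$ are recoverable from the length $n$ alone, so $\C(\omega_1\dots\omega_n)\le n-\log n+O(1)$ for infinitely many $n$, for every $\omega$ whatsoever. Hence, for each fixed $d$, the minimal $d$-dip witnesses form an antichain whose cylinders cover all of $\{0,1\}^\infty$: total measure $1$, not $O(2^{-d})$. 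So no selection of single dip-witnesses, prefix-free or not, can yield a test; nothing in your construction distinguishes sequences whose deficiency tends to infinity from sequences that merely oscillate. What is actually needed is \emph{persistence}: deficiency at every length of a long interval, together with the genuinely nontrivial bound that the measure of such sets is about $2^{-d}$ --- this interval lemma (or Miller's compactness argument over $\Pi^0_1$ classes) is the heart of the theorem and is absent from the proposal. The exact $\0$-computability of $\C$ does make the relevant sets $\Sigma^0_1(\0)$, so that part of your plan is sound, but it contributes nothing toward the measure bound, and invoking universality of the $\0$-test cannot ``upgrade $O(\log n)$ to $O(1)$'' when the underlying estimates are unavailable.
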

 The proof of this theorem relativizes to the halting problem $\0$, i.e., a sequence is $3$-random 
 if and only if there are a $c$ and infinitely many $n$ such that 
 $\CH(\omega_1\dots\omega_n) \ge n - c$.
 Fix such an $n$. By Lemma~\ref{lem:relateDeficiencies}:
 \begin{equation}\label{eq:select_n}
   \KH(\omega_1\dots\omega_n| n)\ge n - O(1)\,.
 \end{equation}
 From now on we only use complexities that are conditional to~$n$. For notational
 simplicity we drop $n$ from the condition, thus $\K(a) \equiv \K(a|n)$, $\K(a|b) \equiv
 \K(a|b,n)$, etc.  
 
 Let $i$ be the largest number such that 
 \begin{itemize}
   \item[$(i)$] 
     $\K(\K(i)|i) \ge \logg i - c$ and $\K(i) \ge (\log i)/2$,
     where $c$ is the constant from Theorem~\ref{th:GacsTight}.
   \item[$(ii)$] 
     $\la i,x \ra \le n$ for all $x$ of length at most $1 + \logg i$.
 \end{itemize}
 Such $i$ exists because also the conditional version of Theorem~\ref{th:GacsTight} holds.
 In fact, for increasing choices of $n$, we find infinitely many such $i$.
 %Note that by  Theorem~\ref{th:GacsTight} such $i$ always exist for large $n$, and in fact we obtain
 By  Lemma~\ref{lem:GacsTight}, the first condition implies 
 \begin{equation}\label{eq:logKiVsLoggi}
   \log \K(i) = \logg i + O(1)\,.
 \end{equation}
 Note that $i$ and $\K(i)$ can be computed from $\0$ and $n$, hence  \eqref{eq:select_n} implies
 \begin{equation}\label{eq:select_i}
  \K(\omega_1\dots\omega_n| i,\K(i)) \ge  n - O(1)\,.
 \end{equation}

 \medskip
 \textbf{Construction of $q$.}
 $q$ is given by the bitwise XOR-function of $K(i)$ in binary, %(denoted as $\la \K(i) \ra$), 
 and the initial segment of $\omega$ with the same length:
 \[
 q = XOR\left(\omega_1\dots \omega_{\log K(i)}, \la K(i) \ra \right) \,.
 \]
 Because $XOR(a,XOR(a,b)) = b$ this implies $(A)$. 

 Recall that all complexities implicitly have $n$ in the condition and that $\K(\K(i)|i) \ge \logg i + O(1)$.
 Together with \eqref{eq:select_i}, this can be applied to Lemma~\ref{lem:XORencryption}  (with $l = \log \K(i) = \logg i + O(1)$,
 $bw = \omega_1\dots \omega_n$ and $a = \la \K(i) \ra$) 
 and we conclude that $\K(\omega_1\dots \omega_n|i,q,n) \ge n - O(1)$, i.e. condition $(a)$.

 For large $n$, we have large $i$, and hence $|q| = \log \K(i) \le \log (2\log i) = 1 + \logg i$. 
 By choice of $i$ (the second condition) this implies $(b)$. 
 We assumed that the pairing function satisfies $\log \la i,q \ra = \log i + O(|q|) = \log i +
 O(\logg i)$.  Thus $\logg \la i,q \ra = \logg i + O(1)$. By~\eqref{eq:logKiVsLoggi} this implies $(B)$.

 It remains to show $(C)$. %, i.e. $\K(i,q) = \K(i) + \log \K(i) + O(1)$.
 Note that
 \[
 \K(i,q) = \K(i) + \K\left(q | i, \K(i)\right)\,.
 \]
 The last term equals 
 $\K(\omega_1\dots \omega_{\log \K(i)}|i, \K(i))$.  By  \eqref{eq:select_i} and
 Lemma~\ref{lem:deficiencyInitialSegment}  this is at least
 $\log \K(i) + O(1)$, and in fact it is equal to this, because $K(z|\,|z|) \le |z|$ for all $z$. 
\end{proof}

\section{Contrasting expectation and probabilistically bounded deficiency}        \label{sec:contrastingDeficiencies}

Recall from the introduction that there exist two different notions 
of randomness deficiency for a sequence $\omega$. 
We start by showing that the two notions are related. 

\begin{proposition}\label{prop:characterize_d_P}
  \[
  d_P(\omega) = \sup \{k: d_E(\omega|k) \ge k\} + O(1)\;\;\;
  \footnote{
    Conditional probability bounded deficiency is defined in the natural way: it 
    is the logarithm of a multiplicatively maximal function $f(\cdot|k)$ that is 
    lower semicomputable uniformly in $k$, such that for each $k$ the function is a 
    probability bounded test.
    %There exists a universal such function and $d_E(\cdot|\cdot)$ is its logarithm.
    }
  \]
\end{proposition}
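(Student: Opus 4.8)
The plan is to prove the two inequalities separately, in each case hand-building a test and comparing it to the relevant universal one. Write $t_P = 2^{d_P}$ for the universal lower semicomputable probability bounded test and $t_E(\cdot|k) = 2^{d_E(\cdot|k)}$ for the universal lower semicomputable family of expectation bounded tests, universal in the sense of the footnote, i.e. it multiplicatively dominates every uniformly lower semicomputable family of expectation bounded tests. Throughout I use that a lower semicomputable function has effectively open strict superlevel sets $\{f > r\}$, so their indicators are again lower semicomputable; the distinction between $\{f > r\}$ and $\{f \ge r\}$ costs only a negligible factor and I suppress it.

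For the inequality $\sup\{k : d_E(\omega|k)\ge k\}\le d_P(\omega)+O(1)$, I would set $A_k = \{\omega : d_E(\omega|k)\ge k\}$. Since $\int t_E(\cdot|k)\,\mathrm{d}\mu \le 1$, Markov's inequality gives $\mu(A_k)=\mu\{t_E(\cdot|k)\ge 2^k\}\le 2^{-k}$, and the $A_k$ are uniformly effectively open. The function $t(\omega)=\sup_k 2^k\mathbf 1_{A_k}(\omega)$ is then lower semicomputable and satisfies $\mu\{t\ge 2^m\}\le\sum_{k\ge m}\mu(A_k)\le\sum_{k\ge m}2^{-k}=O(2^{-m})$, so after dividing by a constant it is a probability bounded test. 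By universality $t\le 2^{O(1)}t_P$, hence $\omega\in A_k$ forces $t_P(\omega)\ge 2^{k-O(1)}$, i.e. $d_P(\omega)\ge k-O(1)$; taking the supremum over $k$ with $\omega\in A_k$ gives the claim.

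For the reverse inequality $\sup\{k : d_E(\omega|k)\ge k\}\ge d_P(\omega)-O(1)$, I would build a conditional family aimed at level $k$ directly from $t_P$. Put $V(\omega|k)=2^{k}\mathbf 1[t_P(\omega)>2^{k}]$; the probability bounded test property gives $\int V(\cdot|k)\,\mathrm{d}\mu=2^k\mu\{t_P>2^k\}\le 1$, and the family is uniformly lower semicomputable, so universality yields $t_E(\cdot|k)\ge 2^{-c_0}V(\cdot|k)$ for a fixed constant $c_0$. Thus $t_P(\omega)>2^{k}$ already gives $d_E(\omega|k)\ge k-c_0$. To turn $k-c_0$ into $k$, I would raise the target level inside the construction, defining instead $V(\omega|k)=2^{k+c_0}\mathbf 1[t_P(\omega)>2^{k+c_0}]$ with $c_0$ the family's own universality constant, obtained by the recursion theorem (the integral bound is unchanged). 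Then $t_P(\omega)>2^{k+c_0}$ yields $d_E(\omega|k)\ge k$, and choosing $k=\lfloor d_P(\omega)\rfloor-c_0-1$ makes the hypothesis hold while $k\ge d_P(\omega)-O(1)$; hence the supremum is at least $d_P(\omega)-O(1)$.

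The two inequalities combine to give the proposition. I expect the main obstacles to be two. First, everything rests on a clean formulation and proof of universality for the conditional expectation bounded test, so that the hand-built objects $t$ and $V(\cdot|k)$ can be compared to the universal ones with index-independent constants. Second, in the reverse direction obtaining $d_E(\omega|k)\ge k$ exactly, rather than merely $\ge k-O(1)$, requires absorbing the universality constant into the definition of $V$ via a self-referential (recursion-theorem) fixed point; the remaining lower-semicomputability bookkeeping around strict versus non-strict superlevel sets is routine.
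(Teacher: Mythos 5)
Your proof is correct and follows the same two-sided strategy as the paper's own proof: for $\sup\{k: d_E(\omega|k)\ge k\}\le d_P(\omega)+O(1)$ you combine the sets $A_k=\{\omega: d_E(\omega|k)\ge k\}$, each of measure at most $2^{-k}$ by Markov's inequality, into a single lower semicomputable probability bounded test and invoke universality of $t_P$; for the converse you feed the family $2^{k}\mathbf{1}[t_P\ge 2^{k}]$, which is expectation bounded precisely because $t_P$ is probability bounded, into the universality of $t_E(\cdot|k)$. Both constructions are literally the paper's. The one point where you add something genuine is the final step of the converse. The paper stops at the conclusion $d_E(\omega|k)\ge k-O(1)$ for $k\le d_P(\omega)$, which does not by itself produce a $k'$ with $d_E(\omega|k')\ge k'$: lowering $k$ to absorb the constant also changes the condition in $d_E(\cdot|k)$, so the defect reappears. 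Your recursion-theorem fixed point, which hard-wires the family's own universality constant $c_0$ into the threshold $2^{k+c_0}$, closes this gap, and it is legitimate because the bound $\int V(\cdot|k)\,\text{d}\mu\le 1$ holds no matter what constant is inserted, so the fixed-point family really is expectation bounded. For what it is worth, the recursion theorem can be avoided: take the single family $W(\cdot|k)=c\sum_{d\ge 1} d^{-2}\,2^{k+d}\,\mathbf{1}\left[t_P> 2^{k+d}\right]$, expectation bounded for a suitable constant $c>0$, whose universality constant $c_1$ is then fixed once and for all, and use the term with $d$ a constant so large that $d\ge 2\log d+c_1+O(1)$; alternatively, invoke the observation from the paper's proof of Lemma~\ref{lem:relateInfDeficiency} that replacing the condition $k$ by $k-d$ changes $d_E(\omega|\cdot)$ by at most $O(\log d)$. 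Either way, on this point your write-up is more careful than the paper's.
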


This characterization is closely related to a characterization of plain complexity in terms of prefix complexity 
(see~\cite[Lemma 3.1.1 p. 203]{LiVitanyi}):
\[
\C(x) = \min \left\{k: \K(x|k) \le k  \right\} + O(1)\,. 
\]
Many results relating and contrasting prefix and plain complexity on one side, 
can be translated to results about expectation and probability bounded
deficiency. (In these results $d_E(\cdot)$ corresponds to $\K(\cdot)$ and $d_P(\cdot)$ to $\C(\cdot)$.)

\begin{proof}
  For the $\ge$-direction 
  we need to show that the exponent of the supremum defines a lower-semicomputable probability bounded test. 
  $d_E$ is lower semicomputable, thus also the supremum is lower semicomputable, 
  and it remains to show that the measure where it exceeds $\ell$ is bounded by $O(2^{-\ell})$.
  By definition we have $\int 2^{d_E(\omega|k)} \text{d}\omega \le 1$ for all $k$, thus the measure of 
  $\omega$ such that $d_E(\omega|k) \ge k$ is at most $2^{-k}$.
  If the supremum exceeds $\ell$ for some $\omega$, then 
  $d_E(\omega|k) \ge k$ for some $k \ge \ell$. The total measure for which this can happen is 
  at most $2^{-\ell} + 2^{-\ell-1} + \dots \le O(2^{-\ell})$. 

  For the $\le$-direction note that every probability bounded test $f$
  defines a family of expectation bounded tests $g(\cdot|k)$ such that 
  $g(\omega|k) = 2^k$ iff $f(\omega) \ge 2^k$. 
  Indeed the condition implies $\int f(\omega|k) \text{d}\omega \le 2^k\cdot 2^{-k} = 1$. 
  Obviously, if $f$ is lower semicomputable, the tests $g(\cdot|k)$ 
  are lower semicomputable uniformly in $k$. 
  If $f$ is the universal test corresponding to $d_P$, then $d_P(\omega) \ge k$ implies
  $f(\omega) \ge 2^k$, which implies $g(\omega|k) \ge 2^k$ thus $d_E(\omega|k) \ge k - O(1)$.
\end{proof}

The question was raised in~\cite[Question 1]{GacsTestsInClass} whether the two deficiencies
are related by a monotone function, or
\textit{does there exist two families of sequences $\omega^{\ell}$ and $\omega'^{\ell}$ such that
\[
d_A(\omega^{\ell}) - d_A(\omega'^{\ell}) \rightarrow \infty
\]
for $\ell \rightarrow \infty$ and
\[
d_P(\omega^{\ell}) - d_P(\omega'^{\ell}) \rightarrow -\infty\,.
\]
} \!We show this is indeed the case.

\begin{theorem}\label{th:averageVsProbDeficiency}
  There exist families of sequences $\omega^{\ell}$ and $\omega'^\ell$ such that for infinitely many~$\ell$
  \[
   |d_P(\omega^{\ell}) - d_P(\omega'^\ell)| \le O(1)
   \] 
   if $\ell \rightarrow \infty$ and 
  \[ 
  d_E(\omega^\ell) - d_E(\omega'^\ell) \ge \ell - O(1)\,.
  \]
\end{theorem}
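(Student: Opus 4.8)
The plan is to transport the $B(n)$‑and‑G\'acs technique of Theorem~\ref{th:SolovayII} to the deficiency setting, using two ingredients. The first is the standard complexity expression for expectation‑bounded deficiency,
\[
 d_E(\omega) = \sup_n\left[\,n - \K(\omega_1\dots\omega_n)\,\right] + O(1),
\]
together with its conditional form $d_E(\omega|k)=\sup_n[\,n-\K(\omega_1\dots\omega_n|k)\,]+O(1)$. The second is the characterization $d_P(\omega)=\sup\{k:d_E(\omega|k)\ge k\}+O(1)$ of Proposition~\ref{prop:characterize_d_P}; these play the roles of Lemma~\ref{lem:relateDeficiencies} and of the identity $\C(x)=\min\{k:\K(x|k)\le k\}$. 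It then suffices, for infinitely many $\ell$, to fix an incompressible scale $V$ near $2^\ell$ with $\K(V)\ge\ell$ and to build two sequences, each a structured prefix followed by a tail $\zeta$ that is random relative to the construction (so that only the prefix contributes), both with $d_P=V+O(1)$ but with expectation deficiencies differing by $\K(V)$.

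For the \emph{gapped} sequence I take $\omega'^\ell=0^V\zeta$. Since $\K(0^m)=\K(m)+O(1)$, the only positive contribution to $d_E$ is at $m=V$, giving $d_E(\omega'^\ell)=V-\K(V)+O(1)\le V-\ell+O(1)$: the block must pay $\K(V)$ to name its own length. Conditionally, however, $d_E(\omega'^\ell|k)=V-\K(V|k)+O(1)$, and conditioning on any number just below $V$ determines $V$ and so unlocks the block, since $\K(V\,|\,V-O(1))=O(1)$; hence the conditional crossing reaches $V$ and $d_P(\omega'^\ell)=V+O(1)$. This is exactly the deficiency analogue of Theorem~\ref{th:SolovayII}: the scale is cheap to guess for the probability test but costs $\K(V)$ for the expectation test.

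For the \emph{tight} sequence I take $\omega^\ell=x\zeta$, where $x$ has length $n>V$ with $\K(x)=n-V$ and is chosen G\'acs‑hard, with $\K(\K(x)|x)$ maximal, so that the scale $V=n-\K(x)$ is essentially independent of $x$ (i.e.\ $\K(V|x)=\K(V)-O(1)$). Because the deficiency is now charged nothing for naming its scale, $d_E(\omega^\ell)=n-\K(x)=V+O(1)$, so $d_P(\omega^\ell)\ge V-O(1)$. For the matching upper bound, note that conditioning on a nearby number $V+j$ can lower $\K(x)$ by at most $O(\log j)$ once $V$ is hidden from $x$, whereas a crossing at $k=V+j$ would require a saving of $j$; hence crossings occur only for $j=O(1)$ and $d_P(\omega^\ell)=V+O(1)$ as well. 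Subtracting gives $d_E(\omega^\ell)-d_E(\omega'^\ell)=\K(V)+O(1)\ge\ell-O(1)$ while $|d_P(\omega^\ell)-d_P(\omega'^\ell)|=O(1)$, which is the theorem.

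The hard part will be the tight sequence. First, one must produce, by the string version of G\'acs' theorem underlying Theorem~\ref{th:GacsTight}, an $x$ that simultaneously has the prescribed complexity $\K(x)=n-V$, has every initial segment of deficiency at most $V$ (so that no intermediate length inflates $d_E$ past $V$), and has $\K(\K(x)|x)$ maximal; fixing all three conditions at once is the main technical obstacle. Second, one must make the upper bound $d_P(\omega^\ell)\le V+O(1)$ rigorous, i.e.\ prove that the $V$‑bit deficiency of $x$ cannot be named cheaply by any $k$ slightly above $V$ — conditioning on the exact scale always threatens to save about $\KK(x)$ bits, which is precisely why $x$ must be G\'acs‑hard, just as $\K(\K(n)|n)$ drove the earlier separations. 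If the residual savings from the $O(\log V)$ scales adjacent to $V$ prove awkward to push down to $O(1)$, I would hide the scale explicitly with the XOR‑encryption of Lemma~\ref{lem:XORencryption}, exactly as in the proof of Theorem~\ref{th:unboundedK_boundedC}.
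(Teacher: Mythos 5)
Your gapped sequence $\omega'^\ell=0^V\zeta$ is essentially the paper's $0^k1\omega$ and that half is fine, but the tight sequence is where the proposal breaks, and the failure is not a technical obstacle of the kind you flag at the end: the requirements you impose on it are mutually contradictory, so no choice of $x$ (G\'acs-hard or not, XOR-encrypted or not) can satisfy them. You want $d_E(x\zeta)=V+O(1)$ and $d_P(x\zeta)=V+O(1)$ simultaneously, with $V=n-\K(x)$ and $\K(V)\ge\ell$. The problem is that your ``independence'' $\K(V|x)=\K(V)-O(1)$ is not independence in the sense relevant to symmetry of information: since $V=|x|-\K(x)$, we have $\K(V|x,\K(x))=O(1)$ by definition, so
\[
\K(x|V,\K(V))\;=\;\K(x,V)-\K(V)+O(1)\;=\;\K(x)-\K(V)+O(1),
\]
no matter how large $\K(\K(x)|x)$ is; the mutual information between $x$ and its own scale is always $\K(V)$, it is merely expensive to extract $V$ \emph{from} $x$, which is a different thing. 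Now conditioning on the single number $V$ already realizes almost all of this saving, because $\K(x|V)\le\K(\K(V)|V)+\K(x|V,\K(V))+O(1)$ and $\K(\K(V)|V)\le\K(\K(V))=O(\log\K(V))$ (the extreme hardness $\K(\K(V)|V)\approx\K(V)$ that would block this is impossible, since $\K(\K(V))$ is exponentially smaller than $\K(V)$). Hence for every $j\le\K(V)-O(\log\K(V))$ we get $\K(x|V+j)\le\K(x|V)+O(\log j)\le\K(x)-j$, so by the characterization $d_E(\alpha|z)=\sup_m[m-\K(\alpha_1\dots\alpha_m|z)]+O(1)$ (taking $m=n$) the condition $k=V+j$ is a crossing in the sense of Proposition~\ref{prop:characterize_d_P}, and therefore
\[
 d_P(\omega^\ell)\;\ge\;V+\K(V)-O(\log\K(V))\;\ge\;V+\ell-O(\log\ell).
\]
So the tight sequence's $d_P$ overshoots $V$ by essentially the same amount $\K(V)$ by which the gapped sequence's $d_E$ undershoots it: your two families have $|d_P(\omega^\ell)-d_P(\omega'^\ell)|\ge\ell-O(\log\ell)$, the opposite of what the theorem needs. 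The sentence ``conditioning on a nearby number $V+j$ can lower $\K(x)$ by at most $O(\log j)$ once $V$ is hidden from $x$'' is exactly where this goes wrong: hiding $V$ from $x$ is irrelevant because the compression runs the other way -- the definition of $d_P$ hands the condition $k$ to the test for free, and some $k$ near $V$ always pays off by $\K(V)$ for the definitional reason above. The XOR fallback does not help either: it changes the head, but the argument applies verbatim to whatever the new head is.

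This impossibility is precisely the deficiency analogue of Solovay's $\K(x)\ge\C(x)+\CC(x)-O(\CCC(x))$, and it explains why the paper's proof never attempts a sequence with $d_P=d_E+O(1)$. In the paper \emph{both} sequences have $d_P=k$ and $d_E$ close to $k-\K(k)$; the $\ell$-gap is created not by sparing one sequence the cost of naming its scale (which cannot be done), but by appending $\la\K(k)\ra$ to the common prefix $0^k1$ for a G\'acs-hard $k$: those $\logg k$ bits raise $d_E$ by $\ell$, since they come for free given any shortest description of $k$ (whose own length is $\K(k)$), yet they leave the crossing at $k$ -- and hence $d_P$ -- unchanged, because given only $k$ they cost exactly $\K(\K(k)|k)=\logg k$. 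That asymmetry, supplied by Theorem~\ref{th:GacsTight} and exploited through Lemma~\ref{lem:symmetryOfDeficiency} and Lemma~\ref{lem:relateInfDeficiency}, is the mechanism your proposal is missing.
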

The positive answer to the question above follows by 
prepending $\ell/2$ zeros to $\omega'^\ell$ for all $\ell$. %(in fact any function of $l$ below $l$ and above
%$0$ by more than the $O(\log l)$ terms can be choosen).
This decreases the complexities in the definition of $d_P(\omega'^\ell)$ and $d_E(\omega'^\ell)$  by
$\ell/2 + O(\log \ell)$ 
and hence increases these deficiencies by the same amount; and this is enough for the question.

Before presenting the proof, we show two lemmas that play the same role as symmetry of information 
and Levin's result relating plain and prefix complexity (i.e. Lemma~\ref{lem:relateDeficiencies}). 

\begin{lemma}[Symmetry of deficiency]\label{lem:symmetryOfDeficiency}
  For all $\omega$ and all $x$ that belong to a prefix-free computably enumerable set, we have 
  \[
    d_E(x\omega) = |x| - \K(x) + d_E(\omega|x,\K(x)) + O(1)\,,
  \]
  here $x\omega$ denotes concatenation of $x$ and $\omega$. 
  The $O(1)$-term depends on the choice of the computably enumerable set.
\end{lemma}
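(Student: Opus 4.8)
The plan is to prove this as a direct analogue of Levin's symmetry of information for prefix complexity, using the characterization of $d_E$ as the logarithm of a universal lower semicomputable expectation bounded test. The key observation is that for a string $x$ in a fixed prefix-free computably enumerable set $S$, the concatenation map $\omega \mapsto x\omega$ interacts cleanly with expectation bounds: prepending such an $x$ shifts the uniform measure by a factor $2^{-|x|}$ on the cylinder $[x]$, and prefix-freeness of $S$ guarantees the cylinders $[x]$ for $x \in S$ are pairwise disjoint, so the various conditional tests can be summed without blowing up the total integral. I would work throughout at the level of lower semicomputable tests and only take logarithms at the end, translating the multiplicative statement $2^{d_E(x\omega)} \asymp 2^{|x|-\K(x)} \cdot 2^{d_E(\omega|x,\K(x))}$ into the additive claim of the lemma.

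First I would establish the $\le$ direction (i.e. $d_E(x\omega) \le |x|-\K(x)+d_E(\omega|x,\K(x))+O(1)$) by exhibiting a single lower semicomputable expectation bounded test on $\{0,1\}^\infty$ that dominates the right-hand side. Concretely, define $f(\xi)$ by searching, for each $x \in S$ that is a prefix of $\xi$ (there is at most one, by prefix-freeness), and setting $f(x\omega) = 2^{-\K(x)} \cdot 2^{|x|} \cdot 2^{d_E(\omega|x,\K(x))}$ on the cylinder $[x]$, and $f=0$ off $\bigcup_{x\in S}[x]$. The integral bound is the heart of this direction: $\int_{[x]} f = 2^{-|x|}\cdot 2^{-\K(x)}\cdot 2^{|x|}\cdot \int 2^{d_E(\cdot|x,\K(x))} \le 2^{-\K(x)}$, using that the conditional universal test integrates to at most $1$ and that $[x]$ has measure $2^{-|x|}$. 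Summing over $x\in S$ and invoking Kraft's inequality $\sum_{x\in S} 2^{-\K(x)} \le O(1)$ (valid since $S$ is prefix-free c.e., so a shortest-program-style bound applies) shows $\int f \le O(1)$, so $f/O(1)$ is a legitimate expectation bounded test; lower semicomputability is routine since $S$ is c.e., $\K$ is upper semicomputable, and $d_E(\cdot|x,\K(x))$ is uniformly lower semicomputable. By universality of the test behind $d_E$, this gives the inequality.

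For the $\ge$ direction I would go the other way: given the universal test $f_E$ realizing $d_E(x\omega)$, I would extract, for each fixed $x\in S$, a conditional expectation bounded test on $\omega$. Define $g_x(\omega) = 2^{\K(x)}\cdot 2^{-|x|}\cdot f_E(x\omega)$; then $\int g_x \,d\omega = 2^{\K(x)}\cdot 2^{-|x|}\cdot 2^{|x|}\int_{[x]} f_E \le 2^{\K(x)}\cdot 1$, which is not yet bounded, so the multiplicative $2^{\K(x)}$ factor must be absorbed. This is where I expect the main obstacle to lie: one must show $g_x$ is a valid \emph{conditional} test uniformly in the condition $(x,\K(x))$, and the clean way is to note that knowing $\K(x)$ lets us renormalize. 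Since the cylinders are disjoint, $\sum_{x\in S} 2^{\K(x)}\int_{[x]} f_E$ need not converge, so the condition $(x,\K(x))$ is essential — it removes the need to sum over $x$ and lets each $g_x$ be handled in isolation, with $\int_{[x]}f_E \le 2^{-\K(x)-d}$ holding only on a measure-graded scale. I would therefore argue that the $(x,\K(x))$-conditional universal test dominates $g_x$ within $O(1)$, taking logarithms to conclude $d_E(\omega|x,\K(x)) \ge |x|-\K(x)+d_E(x\omega)\cdot(-1)\cdots$, i.e. rearranging to the stated inequality. The delicacy throughout is keeping the multiplicative constants (the $O(1)$ depending on $S$ via Kraft's inequality) separate from the additive $\K(x)$ and $|x|$ terms, and ensuring the conditioning on $\K(x)$ — rather than just $x$ — is exactly what makes the renormalization uniform and computable, mirroring the role of $\K(x)$ in the condition in ordinary symmetry of information.
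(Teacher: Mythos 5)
Your overall architecture (prove each inequality by exhibiting a test and invoking universality) is sound, but note first that your two directions are labelled backwards: gluing the conditional tests $2^{|x|-\K(x)}\,2^{d_E(\omega|x,\K(x))}$ over the disjoint cylinders $[x]$, $x\in S$, and bounding $\int f \le \sum_x 2^{-\K(x)}\le O(1)$ proves the \emph{lower} bound $d_E(x\omega)\ge |x|-\K(x)+d_E(\omega|x,\K(x))-O(1)$, since universality of the test behind $d_E$ bounds $d_E(x\omega)$ from below by anything a legal test achieves; the extraction of conditional tests from $f_E$ is what gives the upper bound. The genuine gap is in that second construction. For $g_x(\omega)=2^{\K(x)-|x|}f_E(x\omega)$ to be an $O(1)$ expectation-bounded test you need
\[
\int_{[x]} f_E \,d\mu \;\le\; O\left(2^{-\K(x)}\right)\qquad\text{for } x\in S\,,
\]
and conditioning on $(x,\K(x))$ does nothing towards this inequality: it only makes $g_x$ uniformly lower semicomputable. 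Your claim that the condition ``lets each $g_x$ be handled in isolation'' is exactly wrong: in isolation all you know is $\int_{[x]}f_E\le 1$, and the displayed inequality is in fact \emph{false} for arbitrary $x$ --- a prefix $z$ of $x$ can have deficiency exceeding that of $x$ by an unbounded (logarithmic-size) amount, and then $\int_{[x]}f_E\ge 2^{-|x|}2^{|z|-\K(z)}/O(1) \gg 2^{-\K(x)}$; this is the same phenomenon the paper flags with ``in general this is false''. The inequality holds only because of the global structure of $S$: the map $Q(x)=\int_{[x]}f_E$ for $x\in S$ (and $Q(x)=0$ elsewhere) is lower semicomputable because $S$ is c.e., and satisfies $\sum_{x\in S}Q(x)\le\int f_E\le 1$ because prefix-freeness makes the cylinders pairwise disjoint; hence $Q$ is a discrete semimeasure and the coding theorem gives $Q(x)\le O(2^{-\K(x)})$. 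This coding-theorem step is the heart of that direction and is absent from your plan. It is the exact analogue of the step in the paper's proof (which instead works at the level of finite strings, via the characterization $d_E(\omega|z)=\sup_n\{n-\K(\omega_1\dots\omega_n|z)\}+O(1)$ and symmetry of information), where $\K(x|z)\le|x|-|z|+O(1)$ for prefixes $z$ of $x\in S$ is also derived from the coding theorem.

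A secondary repair is needed in your gluing construction: lower semicomputability of your $f$ is not ``routine''. Since $\K(x)$ is only upper semicomputable and $k\mapsto d_E(\omega|x,k)$ is not monotone in the condition, plugging a decreasing approximation $\K_t(x)$ into the condition does not approximate $f$ from below, and $f$ as written need not be lower semicomputable. The standard fix is to use $\tilde f(x\omega)=\sup_t 2^{|x|-\K_t(x)}\,2^{d_E(\omega|x,\K_t(x))}$, which is lower semicomputable by construction, dominates the intended value, and still satisfies $\int_{[x]}\tilde f\,d\mu\le\sum_{k\ge\K(x)}2^{-k}\le 2^{1-\K(x)}$ because every value $\K_t(x)$ is at least $\K(x)$. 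With these two repairs your argument closes, and it is then a genuinely different route from the paper's: you redo symmetry of information at the level of tests, where the paper quotes the sup-characterization of $d_E$ and reduces everything to symmetry of information for $\K$ on finite strings.
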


The proof uses a characterization of expectation bounded deficiency 
in terms of prefix Kolmogorov complexity (see for example~\cite[Proposition 2.22]{GacsTestsInClass}):
\begin{theorem*}
$ d_E(\omega|z)   = \sup_n \left\{n - \K(\omega_1\dots\omega_n|z)\right\} + O(1) $
\end{theorem*}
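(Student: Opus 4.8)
The plan is to reduce both sides to prefix complexity via the stated characterization and then split the supremum according to the length of the prefix. Writing $\ell = |x|$ and taking the conditioning string empty, the characterization gives $d_E(x\omega) = \sup_{m \ge 0}\{m - \K((x\omega)_1\dots(x\omega)_m)\} + O(1)$. The prefixes of $x\omega$ of length $m \le \ell$ are exactly the prefixes of $x$, while those of length $m = \ell + k$ with $k \ge 0$ are the strings $x\omega_1\dots\omega_k$. I would treat these two ranges separately, showing that the long prefixes ($m \ge \ell$) already produce the claimed right-hand side and that the short ones ($m < \ell$) never dominate.

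For the long prefixes, the key point is that membership of $x$ in a prefix-free computably enumerable set $S$ lets me parse the concatenation: given $x\omega_1\dots\omega_k$ one enumerates $S$ and recovers the unique prefix lying in $S$, which must be $x$ (hence also $\omega_1\dots\omega_k$), while conversely the concatenation is trivially computed from the pair. Thus $\K(x\omega_1\dots\omega_k) = \K(x,\omega_1\dots\omega_k) + O(1)$, the constant depending on the description of $S$. Symmetry of information then gives $\K(x,\omega_1\dots\omega_k) = \K(x) + \K(\omega_1\dots\omega_k\mid x,\K(x)) + O(1)$, so that $(\ell+k) - \K(x\omega_1\dots\omega_k) = [\ell - \K(x)] + [k - \K(\omega_1\dots\omega_k\mid x,\K(x))] + O(1)$. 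Taking the supremum over $k \ge 0$ and applying the characterization once more, now to $d_E(\omega\mid x,\K(x))$, yields exactly $\sup_{m\ge \ell}\{\dots\} = |x| - \K(x) + d_E(\omega\mid x,\K(x)) + O(1)$.

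It remains to show that the short prefixes do not increase the supremum beyond this value, i.e. that $m - \K(x_1\dots x_m) \le \ell - \K(x) + O(1)$ for every $m < \ell$, the term $\ell - \K(x)$ being the $k=0$ contribution already counted above. This is where the prefix-free hypothesis is essential and I expect it to be the main obstacle, since for a naive bound one loses a $2\log(\ell-m)$ length term rather than $O(1)$. I would instead build a prefix machine that, with condition $x_1\dots x_m$, copies program bits one at a time onto the condition and halts as soon as the current string is found in the enumeration of $S$; prefix-freeness of $S$ guarantees both that this machine's halting programs form a prefix-free set and that, fed the bits $x_{m+1}\dots x_\ell$, it halts after reading exactly $\ell - m$ of them and outputs $x$, since no proper prefix of $x$ lies in $S$. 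Hence $\K(x\mid x_1\dots x_m) \le (\ell - m) + O(1)$, so by subadditivity $\K(x) \le \K(x_1\dots x_m) + \K(x\mid x_1\dots x_m) + O(1) \le \K(x_1\dots x_m) + (\ell - m) + O(1)$, which rearranges to the desired bound. Combining the two ranges gives the lemma, with the $O(1)$ depending on $S$ as claimed.
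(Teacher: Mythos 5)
You have proved the wrong statement. The statement under review is the characterization $d_E(\omega|z) = \sup_n \left\{n - \K(\omega_1\dots\omega_n|z)\right\} + O(1)$ itself, but your argument takes exactly this characterization as given: you invoke it at the outset (``the characterization gives $d_E(x\omega) = \sup_m\{\dots\}$'') and again at the end (``applying the characterization once more, now to $d_E(\omega\mid x,\K(x))$''). What you actually derive is the identity $d_E(x\omega) = |x| - \K(x) + d_E(\omega\mid x,\K(x)) + O(1)$ for $x$ in a prefix-free computably enumerable set, which is the paper's Lemma~\ref{lem:symmetryOfDeficiency} (symmetry of deficiency) --- a \emph{consequence} of the characterization, and one the paper proves by essentially your argument: split the supremum at $|x|$, handle the long prefixes by symmetry of information, and bound the short ones via $\K(x|z) \le |x| - |z| + O(1)$ for $x \in S$ extending $z$. (The paper obtains this last bound from the coding theorem applied to $P(x|z) = 2^{|z|-|x|}$; your copy-until-enumerated machine gives the same bound and is salvageable, though as stated it deadlocks --- the machine cannot know whether to wait for the enumeration or read another bit. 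The repair is to read bits only against enumerated candidates $s = zr \in S$ consistent with what has been read so far; prefix-freeness of $S$ ensures every wrong candidate disagrees with the true remainder within its length, so the machine never reads past the intended program.) As a proof of the stated theorem, however, your argument is circular and establishes nothing.

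A genuine proof must construct tests, which your proposal never touches. For the direction $d_E(\omega|z) \ge \sup_n [n - \K(\omega_1\dots\omega_n|z)] - O(1)$, one checks that $t(\omega) = \sum_n 2^{\,n - \K(\omega_1\dots\omega_n|z)}$ is lower semicomputable and satisfies $\int t(\omega)\,\text{d}\omega = \sum_x 2^{-\K(x|z)} \le 1$ by the Kraft inequality (each $x$ of length $n$ contributes $2^{-n} \cdot 2^{\,n-\K(x|z)}$), so $t$ is an expectation-bounded test, is dominated by the universal one, and exceeds $\sup_n 2^{\,n-\K(\omega_1\dots\omega_n|z)}$. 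The converse direction, $d_E(\omega|z) \le \sup_n [n - \K(\omega_1\dots\omega_n|z)] + O(1)$, is the substantial one: starting from the universal lower semicomputable expectation-bounded test one extracts a lower semicomputable discrete semimeasure on strings from the test's increments along cylinders and applies the coding theorem to get $\K(\omega_1\dots\omega_n|z) \le n - d_E(\omega|z) + O(1)$ for a suitable $n$, and getting an additive $O(1)$ rather than a logarithmic error here requires care. This is the content of \cite[Proposition 2.22]{GacsTestsInClass}, which the paper cites rather than reproves; your proposal contains neither half.
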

%By probability bounded deficiency} by the characterization above, i.e.
%\begin{eqnarray*}
% d_P(\omega)   &=& \sup_k \left\{k: d_E(\omega|k) \ge k\right\} \\
%	       &=& \sup_{n,k} \left\{k: n - \K(\omega_1\dots \omega_n|k) \ge k\right\} \,.
%\end{eqnarray*}

\begin{proof}[Proof of  Lemma~\ref{lem:symmetryOfDeficiency}.]
 Let $x$ be a member of the prefix-free computably enumerable set. From $xy$ we can 
 compute $x$ by enumerating the prefix-free set until an initial segment of $xy$ 
 and this segment can only be $x$.  Symmetry of information implies
 \[
  \K(xy) = \K(x,y) + O(1) = \K(x) + \K(y|x,\K(x)) + O(1)\,,
 \]
 i.e.
 \[
  |xy| - \K(xy) = |x|-\K(x) + |y| - \K(y|x,\K(x))\,.
 \]
 If we take on both sides the supremum of $y$ over all prefixes of $\omega$, we \textit{almost} 
 obtain the equation of the lemma; the problem is that in the definition of $d_E(x\omega)$ 
 we also need to consider prefixes $z$ of $x$. It remains to verify that 
 \[
   |z| - \K(z) \le |x| - \K(x) + O(1)
   \]
 for all prefixes $z$ of $x$. 
 In general this is false, but for $x$ in a prefix-free enumerable set it holds.
 For any $z$ and $x$,
 let $P(x|z) = 2^{-|x|+|z|}$ if $x$ is an extension of $z$ that belongs to the prefix-free set, 
 otherwise let $P(x|z) = 0$.
 Note that $\sum_x P(x|z) \le 1$ and $P(x|z)$ is lower-semicomputable, hence
 the coding theorem implies $\K(x|z) \le -\log P(x|z) + O(1) \le |x|-|z| + O(1)$. 
 Symmetry of information implies
 \[
  K(x) \le \K(x,z) \le \K(z) + \K(x|z) + O(1) \le \K(z) + |x| - |z| + O(1)\,,	
 \]
 and this implies the equation above.
\end{proof}

%The subsequent lemma is closely related to  Lemma~\ref{lem:relateDeficiencies}. 
%Here $d_E(\cdot)$ behaves like $\K(\cdot)$ and $d_P(\cdot)$ behaves like $\C(\cdot)$.
The analogue of Lemma~\ref{lem:relateDeficiencies} for deficiencies of sequences is
\begin{lemma}\label{lem:relateInfDeficiency}
  For all $j$ and $\omega$
  \[
    \left|j - d_E(\omega|j) \right| = \Theta \left|j - d_P(\omega)\right| \,.
  \]
\end{lemma}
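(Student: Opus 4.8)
The plan is to mirror the strategy behind Lemma~\ref{lem:relateDeficiencies}, with Proposition~\ref{prop:characterize_d_P} playing the role of the characterization $\C(x) = \min\{k : \K(x|k) \le k\}$ and the quoted identity $d_E(\omega|z) = \sup_n\{n - \K(\omega_1\dots\omega_n|z)\} + O(1)$ playing the role of Levin's formula. Writing $m = d_P(\omega)$, I would in fact prove the sharper statement
\[
 d_E(\omega|j) = m + O(\log|j-m|)\,,
\]
from which the lemma follows by the triangle inequality exactly as in the proof of Lemma~\ref{lem:relateDeficiencies}, giving $|j - d_E(\omega|j)| = |j-m| + O(\log|j-m|)$. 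If $\omega$ is not Martin-L\"of random then $m = +\infty$, but also $d_E(\omega|j) \ge d_E(\omega) - O(1) = +\infty$ since conditioning cannot increase complexity, so both sides of the lemma are infinite and there is nothing to prove; hence I assume $m < \infty$.

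The first ingredient is a slow-variation estimate: for all numbers $j,j'$,
\[
 |d_E(\omega|j) - d_E(\omega|j')| \le O(\log|j-j'|)\,.
\]
This is the deficiency analogue of the fact that $\K(x|j)$ changes by at most $O(\log|j-j'|)$ when the condition is altered from $j'$ to $j$. To prove it I would start from the $\sup_n$ characterization and use $\K(\omega_1\dots\omega_n|j) \le \K(\omega_1\dots\omega_n|j') + \K(j'|j) + O(1)$ (compute $j'$ from $j$, then the initial segment from $j'$); taking the supremum over $n$ gives $d_E(\omega|j) \ge d_E(\omega|j') - \K(j'|j) - O(1)$, and the symmetric inequality holds as well. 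Since $\K(j'|j) \le \K(j-j') + O(1) \le O(\log|j-j'|)$, this yields the claim.

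The second, and main, step is to pin down the value of $d_E(\omega|\cdot)$ at the threshold $m$. By Proposition~\ref{prop:characterize_d_P} the set $S = \{k : d_E(\omega|k) \ge k\}$ satisfies $m = \sup S + O(1)$; it is nonempty and, because $k \le d_E(\omega|k) \le d_E(\omega) + O(\log k)$ forces $k$ to be bounded whenever $k \in S$, it has a greatest element $k^*$ with $m = k^* + O(1)$. From $k^* \in S$ I get $d_E(\omega|k^*) \ge k^*$, and from $k^*+1 \notin S$ I get $d_E(\omega|k^*+1) \le k^*$; feeding both into the slow-variation estimate (with $j' = k^*$, respectively $j' = k^*+1$, where the difference from $m$ is $O(1)$) gives $d_E(\omega|m) = m + O(1)$. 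Applying slow variation one final time with $j' = m$ then yields $d_E(\omega|j) = m + O(\log|j-m|)$, as required. I expect this last step to be the delicate part: the one-sided characterization of Proposition~\ref{prop:characterize_d_P} must be upgraded into a two-sided estimate at the threshold, and this is exactly where the additive $O(1)$ slack in that proposition and the logarithmic slow-variation bound have to be balanced against each other.
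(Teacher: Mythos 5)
Your proof is correct and follows essentially the same route as the paper: both arguments rest on the logarithmic slow-variation of the map $t \mapsto d_E(\omega|t)$, locate an approximate fixed point of this map, identify it with $d_P(\omega)$ via Proposition~\ref{prop:characterize_d_P}, and then transfer the estimate to arbitrary $j$ by slow variation and the triangle inequality. The only difference is one of detail: where the paper merely asserts that a unique $O(1)$-precise fixed point exists, you construct it explicitly as the maximum of $\{k : d_E(\omega|k)\ge k\}$ and handle the non-random case separately, which tightens the paper's sketch rather than departing from it.
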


\begin{proof}
 For fixed random $\omega$, the map $t \rightarrow d_E(\omega|t)$ maps points at distance $d$ 
 to points at distance $O(\log d)$. Hence, 
 the map has a unique fixed point $t$ within precision $O(1)$, i.e. $d_E(\omega|t) = t + O(1)$ for
 some $t$. 
 This implies that $t$ is $O(1)$-close to the minimal $s$ such that $d_E(\omega|s) \ge s$,
 i.e. $d_P(\omega)$.
 Our observation implies that $d_E(\omega|t+d) = t + O(\log d)$,
 thus for $j = t + d$ we have $j - d_E(\omega|j) = j - d_P(\omega) + O(\log (j-d_P(\omega)))$,
 and this implies the lemma.
\end{proof}

\begin{proof}[Proof of Theorem~\ref{th:averageVsProbDeficiency}.]
  For each $\ell$ we choose a $k$ such that $\logg k \le \ell$ and
  $\K(\K(k)|k) \ge \logg k - c$  where $c$ is the constant 
  from  Theorem~\ref{th:GacsTight}.
  By  Lemma~\ref{lem:GacsTight} 
  \begin{equation}\label{eq:annoying}
    \ell = \logg k = \log \K(k) + O(1)  \,.
  \end{equation}

  We choose $\omega$ such that 
  \[d_P(\omega|k,\K(k)) \le O(1)\,.\] 
  %thus for each $\ell$, a different $\omega$ might be chosen.
  Let $0^k1\omega$ be the sequence that starts with $k$ zeros, followed by a one
  and followed by $\omega$. Let $0^k1\la\K(k)\ra\omega$ be $0^k1$ followed
  by $\K(k)$ in binary, followed by~$\omega$.
  The theorem follows from the values of the expectation and probability bounded deficiencies of
  these strings, given in the table below:
  \[
  \begin{array}{r|l|l}
     \alpha & d_E(\alpha) & d_P(\alpha) \\
     \hline
     0^k1\omega & k - \K(k) + O(1) & k + O(1) \\
     0^k1\la\K(k)\ra^l\omega & k - \K(k) + \ell + O(1)  & k + O(1)
   \end{array}
   \]
   It remains to prove that the values in the table are correct.

  \medskip
  The values of $d_E(\cdot)$ in the first column are obtained from Lemma~\ref{lem:symmetryOfDeficiency}. 
  In the first case, the prefix-free set is the set of strings $0^m1$ for all~$m$, thus
  \[
   d_E(0^k1\omega) = k - \K(k) + d_E(\omega|k,\K(k)) + O(1)\,. 
  \]
  In the second case, the prefix-free set is the of all strings $0^m1z$ for all~$m$ and all $z$ of
  length $\logg m$. Recall that $\K(k,\K(k)) = \K(k) + O(1)$, thus
  \[
   d_E(0^k1\la\K(k)\ra\omega) = k + \logg k - \K(k) + d_E(\omega|k,\K(k)) + O(1)\,. 
  \]
  
  \medskip
  To evaluate $d_P(\cdot)$ we use Lemma~\ref{lem:relateInfDeficiency}. Hence, let us compute
  $d_E(0^k1\omega|k)$. Again we use  Lemma~\ref{lem:symmetryOfDeficiency}:
  \[
   d_E(0^k1\omega|k) = k - \K(0^k1|k) + d_E(\omega|\K(0^k1|k),k) + O(1) = k + d_E(\omega|k) + O(1)\,. 
  \]
  This implies $d_P(0^k1\omega) = k + O(1)$.
  For the second case, note that $\K(0^k1\la\K(k)\ra|k) = \K(\K(k)|k) + O(1) = \logg k + O(1)$ by choice of $k$. 
  With similar reasoning we determine $d_P(0^k1\la \K(k) \ra\omega)$:
  \[
   d_E(0^k1\la\K(k)\ra\omega|k) = (k + \logg k) - \K(\K(k)|k) + d_E(\omega|\K(k), k) + O(1) \,.\\
   \]
   This equals $k + O(1)$ by  \eqref{eq:annoying}.
\end{proof}

\bibliography{eigen,kolmogorov}

\appendix

\section{$\KKK(x) = \CCC(x) + O(\CCCC(x))$ does not hold}   \label{sec:CCCvsKKK}

\begin{proposition}\label{prop:CCCvsKKK}
  There exist infinitely many $x$ such that $CCCC\,(x) \le O(\logv |x|)$ and
  \[
 \left|\CCC(x) - \KKK(x)\right| \ge \Omega(\logiv |x|)\,.
  \]
\end{proposition}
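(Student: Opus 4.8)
The plan is to mimic the structure of the earlier contrasting results (Theorem~\ref{th:SolovayII} and Theorem~\ref{th:trivialCvsK}), but one level higher in the complexity hierarchy. The key intuition from Remark~\ref{rem:CCCvsKKK} and footnote~\ref{foot:DoubleComplexities} is that $\CCC(x)$ and $\KKK(x)$ agree up to $O(\CCCC(x))$, so to separate them I need a string $x$ where $\KKK(x) - \CCC(x)$ is large while $\CCCC(x)$ stays very small (at most $O(\logv|x|)$). Writing $n = |x|$, $m = \K(x)$, and tracking the chain $\K, \KK, \KKK$, the gap $\KKK(x) - \CCC(x)$ should be governed by a ``complexity-of-complexity'' obstruction at the top of the tower — analogous to how $\K(\K(n)|n) \approx \logg n$ drove the base-level separation. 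So first I would set up a G\'acs-type nested requirement: I want an $n$ for which the iterated incomputability of complexity persists up through three levels, forcing $\K(\KK(x)|x)$ (or the appropriate conditional) to be around $\logiv n$ while the plain side collapses.

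Concretely, I would begin by choosing $n$ satisfying a strengthened form of Theorem~\ref{th:GacsTight}, pushed one iteration deeper, so that $\logiv n$ plays the role that $\logg n$ played before. Then I would take $x = B(n)$ (the longest-running program of length $\le n$), exactly as in the proof of Theorem~\ref{th:SolovayII}, which gives $\C(x) = n$ and lets me read off $\K(x)$ from $x$ and $n$. The engine of the argument is the relation $\K(x) = \C(x) + \CC(x) + O(\CCC(x))$ from Theorem~\ref{th:SolovayI}: iterating symmetry of information and this relation, I would express each of $\K(x), \KK(x), \KKK(x)$ in terms of $\C(x), \CC(x), \CCC(x)$ plus correction terms, and locate the level at which a $\logiv n$-sized discrepancy is forced by the G\'acs obstruction while all fourth-order terms $\CCCC(x)$ remain $O(\logv n)$. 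The scale bookkeeping is the crux: with $n$ of doubly-exponential-in-$\ell$ size I expect $\CCC(x) \approx \loggg n$, $\CCCC(x) \approx \logiv n$, and the target gap $\approx \logiv n$, so the inequality $|\CCC(x) - \KKK(x)| \ge \Omega(\logiv n)$ has to beat the $O(\CCCC(x))$ slack by a genuine constant factor, not merely asymptotically.

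The main obstacle I anticipate is precisely this last point: footnote~\ref{foot:DoubleComplexities} only gives $\mathit{UVW}(x) \le O(\mathit{XYZ}(x))$ and $\KKK(x) = \CCC(x) + O(\CCCC(x))$ with uncontrolled constants, so a naive estimate cannot distinguish a genuine $\Omega(\logiv n)$ gap from the $O(\CCCC(x))$ error term. This is why the excerpt flags the proof as ``cumbersome'' and invokes a topological argument from~\cite{ShenTopological}. My plan would therefore route the hard step through that topological/counting machinery: rather than trying to compute $\KKK(x)$ and $\CCC(x)$ separately to within additive constants, I would argue that \emph{no} continuous (or sufficiently regular) monotone reparametrisation can simultaneously match both $\K$-tower and $\C$-tower behaviour at the third level, and use a fixed-point or intermediate-value-style argument to extract a point $x$ where the discrepancy provably exceeds the error. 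I would expect to need the full strength of a G\'acs-style lower bound at the innermost level together with Shen's topological lemma to convert the ``$O(\CCCC)$ cannot absorb $\Omega(\logiv)$'' heuristic into an actual separation, and would defer the delicate constant-chasing to the appendix as the paper itself does.
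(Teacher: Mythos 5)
Your proposal stalls exactly where the real work begins: it is organised around a \emph{single} string $x = B(n)$ together with a hoped-for ``G\'acs theorem pushed one iteration deeper'', and neither ingredient can deliver the separation. A nested bound of the form $\K(\KK(x)|x) \approx \logiv |x|$ is not established anywhere (it is not Theorem~\ref{th:GacsTight}, nor a relativisation of it), and even granting it, for any single constructed string the quantities $\KK(x)$, $\CC(x)$, $\KKK(x)$, $\CCC(x)$ are only pinned down up to errors of the same order as the gap you want to exhibit; your last paragraph concedes this and delegates the difficulty to an unspecified ``fixed-point or intermediate-value-style argument''. That sentence is the gap: no mechanism is given that actually produces a string with $\left|\CCC(x) - \KKK(x)\right| \ge \Omega(\logiv |x|)$.

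The paper resolves this with an idea absent from your proposal: it constructs a whole \emph{family} of strings and argues that at least one member must work. Starting from the pair $y,z$ of \cite[Corollary 6]{BauwensCompcomp} (same length $n$, $\C(y)=\C(z)=n$, but prefix complexities differing by $\logg n$), it interpolates bit by bit: $x_i$ consists of the last $i$ bits of $\K(n)$ followed by the first $n-i$ bits of $z$, so that $\K(x_i)=\K(x_{i+1})+O(1)$. Hence the values $\K(x_i)$ are $O(1)$-dense in an interval of length $\Omega(\logg n)$, while all $\C(x_i)$ stay within $O(\loggg n)$ of $n$. The ``topological'' step is then Lemma~\ref{lem:dense}: the image under $\K(\cdot)$ of a dense subset of a large interval is again dense in an interval of logarithmic size, so iterating, the values $\KKK(x_i)$ are dense in an interval of size $\Omega(\logiv n)$, whereas all values $\CCC(x_i)$ are trapped in an interval of size $O(\logv n)$ (since $\C$ maps points at distance $d$ to points at distance $O(\log d)$). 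A pigeonhole argument over the family then yields one $i$ with $\left|\CCC(x_i)-\KKK(x_i)\right| \ge \Omega(\logiv n)$. Two bookkeeping points in your proposal are also wrong: the proposition needs $\CCCC(x) \le O(\logv |x|)$, not ``$\CCCC(x)\approx \logiv |x|$'' as you write (this is why $n$ is additionally chosen so that $\C(\logg n) \le O(\logv n)$, giving $\CC(\K(n)) \le O(\logv n)$ and hence small $\CCCC$), and the target gap then exceeds the fourth-order error by an exponential margin of scales, not merely ``by a genuine constant factor''.
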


We use our main technique to contrast $\C(\cdot)$ and $\K(\cdot)$ 
to disprove that $\KKK(x) = \CCC(x) + O(CCCC\,(x))$. 
However, we will use the variant presented in the proof of~\cite[Corollary 6]{BauwensCompcomp}.
It is combined with a topological argument which was inspired by~\cite{ShenTopological}. 
We start with a definition and a lemma.

\begin{definition}
A set $S$ of numbers is $c$-dense in a superset $A$ if for each $a \in A$ 
there is an $s \in S$ such that $|a-s| \le c$.
\end{definition}

\begin{lemma}\label{lem:dense}
  If $S$ is $c$-dense in an interval of size $k$, then the set
  \[
    \{\K(k): k \in S\} %- \min \{\K(k) : k \in S\} \ge \log k - O(1) \,.
  \]
  is $O(\log c)$-dense in some interval of size $\Omega(\log k - \log c)$.
\end{lemma}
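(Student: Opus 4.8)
The plan is to combine three facts about the function $s \mapsto \K(s)$: that it varies slowly, that its image over an interval is essentially an interval, and a counting bound forcing its variation to be large. Throughout I rename the dummy variable of the set to $s$ (so the set is $\{\K(s) : s \in S\}$) to free up $k$ for the interval size, and I write $I$ for the interval of size $k$ in which $S$ is $c$-dense.

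First I would record the Lipschitz-type estimate $|\K(s) - \K(s')| \le O(\log(|s-s'|+1))$, obtained by describing $s$ from $s'$ and their difference; in particular consecutive integers differ by $O(1)$ and integers at distance $\le c$ differ by $O(\log c)$. The slow variation on consecutive integers drives the ``topological'' step borrowed from~\cite{ShenTopological}: since $\K$ moves by $O(1)$ at each unit step, a discrete intermediate-value argument shows that $\{\K(s) : s \in I\}$ is $O(1)$-dense in $[m,M]$, where $m = \min_{s \in I}\K(s)$ and $M = \max_{s \in I}\K(s)$. Transferring from $I$ to $S$ using $c$-density together with the $O(\log c)$-Lipschitz bound at scale $c$, I get that $\{\K(s) : s \in S\}$ is $O(\log c)$-dense in $[m,M]$ and contains a value within $O(\log c)$ of $m$ and one within $O(\log c)$ of $M$; hence it is $O(\log c)$-dense in a concrete subinterval $[m',M']$ with $M'-m' \ge (M-m) - O(\log c)$. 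This already gives the shape of the statement and reduces everything to the spread bound $M - m \ge \log k - O(\logg k)$.

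The spread bound is the technical heart. Inside $I$ I would locate an aligned dyadic block $A = \{h2^{j} + r : 0 \le r < 2^{j}\}$ with $2^{j} = \Theta(k)$, so $j = \log k - O(1)$; such a block exists because $I$ contains at least $2^{j+1}$ consecutive integers, and any $2^{j+1}$ consecutive integers contain an aligned block of length $2^{j}$. The block has a simple element $h2^{j}$ with $\K(h2^{j}) \le \K(h) + O(\log j)$, and I claim it also has a complex one with $\K \ge \K(h) + j - O(\log j)$. For the latter I condition everything on $j$: given $j$, the pair $(h,r)$ is computably equivalent to $N = h2^{j}+r$, so the conditional symmetry of information gives $\K(N\mid j) = \K(h\mid j) + \K(r \mid h, \K(h\mid j), j) + O(1)$. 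Choosing by counting a residue $r^{*}$ with $\K(r^{*} \mid h, \K(h\mid j), j) \ge j - O(1)$ (at most half of the $2^{j}$ residues fail this) yields $\K(N^{*}) \ge \K(N^{*}\mid j) \ge \K(h\mid j) + j - O(1) \ge \K(h) + j - O(\log j)$. Subtracting the two estimates gives $M - m \ge j - O(\log j) = \log k - O(\logg k)$, with no dependence on the location of $I$.

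Finally I would assemble the pieces: $\{\K(s):s\in S\}$ is $O(\log c)$-dense in $[m',M']$ with $M'-m' \ge (M-m)-O(\log c) \ge \log k - O(\logg k) - O(\log c) = \Omega(\log k - \log c)$, where the error $O(\logg k)$ is absorbed because either $\logg k \le O(\log c)$, or else $\logg k = o(\log k)$ and the spread is already $\Omega(\log k) \ge \Omega(\log k - \log c)$. I expect the main obstacle to be the spread bound, and within it the selection of the near-random residue $r^{*}$ via conditional symmetry of information; the intermediate-value density and the Lipschitz transfer are then routine bookkeeping.
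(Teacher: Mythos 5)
Your proof is correct and takes essentially the same route as the paper's: the Lipschitz/intermediate-value step gives $O(\log c)$-density, and the spread bound comes from comparing a ``simple'' multiple of $2^j$ inside the interval (the paper's minimal number ending in $\log k - 2$ zeros) with a modification of its trailing bits chosen by counting to be incompressible, yielding a complexity gap of $\log k - O(\logg k)$. Your conditional symmetry-of-information computation just formalizes the paper's parenthetical counting remark (``otherwise too many short descriptions exist''), so the two arguments coincide in substance.
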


\begin{proof}[Proof of  Proposition~\ref{prop:CCCvsKKK}.]
  Let $T$ be the set defined by the lemma. Note that the function $\K(\cdot)$ maps points at 
  distance $d$ to points at distance $O(\log d)$, hence $T$ is $O(\log c)$-dense in $[\min T, \max T]$.
  It remains to show that the maximum of this set differs from its minimum 
  by at least $\log k - O(\log c + \logg k)$. Let $r$ be the minimal number in the interval of size $k$
  (in which $S$ is dense) that ends with $\log k - 2$ zeros. 
  By assumption $r$ is at $c$ distance of an element in $S$. 
  On the other side, if the $\log k - 2$ last zeros of $r$ are changed, the
  corresponding number remains always in the interval of size $k$, and for one such change the complexity
  of $r$ must increase by at least $\log k - O(\logg k)$. (Otherwise, to many short descriptions
  exist of such modified $r$ and we could use this to obtain a shorter description for $r$.)
  This element is $c$-close to an element in~$S$, thus the difference of the minimum and the maximum 
  of $\K(k)$ over $S$ is at least $\log k - O(\log c + \logg k)$.
\end{proof}

\begin{proof}[Proof of  Proposition~\ref{prop:CCCvsKKK}.]
  For infinitely many~$n$ we construct strings $x_i$ of length~$n$ such that
  \begin{enumerate}
    \item The values $\K(x_i)$ are dense in an interval of size $\Omega(\logg n)$, while all values $\C(x)$
      are contained in an interval of size $O(\loggg n)$.
    \item The values $\KKK(x_i)$ are dense in an interval of size at least $\Omega(\logiv n)$, 
      while all values $\CCC(x_i)$ are contained in an interval of size $O(\logv n)$.
    \item $CCCC\,(x_i) \le O(\logv n)$. \label{item:smallCCCC}
  \end{enumerate}
  $2$ and $3$ imply Proposition~\ref{prop:CCCvsKKK}. By Lemma~\ref{lem:dense} we can already
  observe that $1$ implies $2$, thus it remains to show $1$ and $3$.

  \medskip
  We start the construction by identifying two strings $y$ and $z$ of length $n$ such that $\K(y) - \K(z) \ge \logg n$
  and $\C(y) = \C(z) = n$ (here and below we omit $O(1)$ terms). 
  %At each step the prefix complexity of the new string differs from the previous one by at
  %most a constant. During this procedure the values of $\K(x)$ are dense in an interval of size
  %$\Omega(\logg n)$.  Hence, the values $\KK(x)$ are dense in an interval of size 
  %$\Omega(\loggg n)$. For the same reason, the values $\KKK(x)$ are dense in an interval of size
  %$\Omega(\logiv n)$. In particular, at some point we have $\KKK(x) \ge \Omega(\logiv n)$.
  More specifically our construction implies $\K(y) = \K(n) + n - \logg n$ and $\K(z) = \K(n) + n$. 
  We use the construction of~\cite[Corollary 6]{BauwensCompcomp} (which slightly differs from 
  the proof of Theorem~\ref{th:SolovayII}). Let us repeat this construction.
  As usual, let $n$ be such that $\K(\K(n)|n) \ge \logg n$.
  For the proof of item~\ref{item:smallCCCC}, note that $n$ exist for all values of $\logg n$, 
  and we choose such values that satisfy
  \begin{equation}\label{eq:Cloggn}
    \C(\logg n)  \le O(\logv n)
  \end{equation}
  (there exist infinitely many such $n$).
  Let $z$ of length $n$ be such that $\C(z|\K(n),n) \ge n$. Let $y$ be the concatenation 
  of $\K(n)$ in binary and the last $n - \logg n$ bits of $z$. By Lemma~\ref{lem:GacsTight}, 
  the length of $\K(n)$ in binary is $\logg n$, thus $|y| = n$.
  %Note that the length of $n$  is $\log n$, thus the length of $\K(n)$ is $\logg n$ 
  %and also $y$ has length $n$. 
  The string $y$ is the same constructed string as 
  in the proof of~\cite[Corollary 6]{BauwensCompcomp}, 
  and there it is shown using symmetry of information that
  $\C(y) = n$ and $\K(y) = \K(n) + n - \logg n$.

  What happens if for some $i \le \logg n$ in this construction 
  only the last $i$ bits from $\K(n)$ and the first $n-i$ bits from $z$  are chosen? 
  Let $x_i$ be the string obtained in this way. Note that $x_{i+1}$ is obtained 
  from $x_i$ by removing the last bit and prepending the $i+1$-th last bit of $\K(n)$.
  This implies that $\K(x_i) = \K(x_{i+1}) + O(1)$.
  For $i = 0$ we have $x_i = y$ and $\K(x_i) = \K(n) + n - \logg n$, and
  for $i = \logg n$ we have $x_i = z$ and thus $\K(x_i) = \K(n) + n$.
  This implies that the values of $\K(x_i)$ are $O(1)$-dense in an interval of size $\logg n$.
  Using symmetry of information in a similar way as before, one can show that
  $\C(x_i) = n + O(\log i)$ (we use that any $i$-bit segment of $\K(n)$ is $O(\log i)$
  incompressible given $i$).
  Recall that $i \le \logg n$, thus this implies that all values $\C(x_i)$ are contained 
  in an interval of size $O(\loggg n)$, and this finishes the proof of item $1$.

  \medskip
  We show item~\ref{item:smallCCCC}.
  Recall that  $\C(x_i) = n + O(\loggg n)$, so we need to show that 
  $\CCC(n) \le O(\logv n)$. We know that $C(\logg n) \le O(\logv n)$ but unfortunately, 
  $\CC(n)$ can contain much more information than $\logg n$. We take another approach by showing
  that $\CC(\K(n)) \le O(\logv n)$ and $\CCC(n) \le O(\CC\K(n))$.
  The last inequality follows from footnote~\ref{foot:DoubleComplexities}.
  For the first, note from footnote~\ref{foot:Gacs} that $\C(\K(n)) = \logg n$, 
  and by \eqref{eq:Cloggn} this implies $\CC(\K(n)) \le O(\logv n)$.
\end{proof}

\end{document}